%% file: main.tex
\documentclass[11pt,letterpaper]{article}

\input{probheader}

\begin{document}
\maketitle

\begin{abstract}

We study the single-pass streaming complexity of deciding satisfiability of Constraint Satisfaction Problems (\textsf{CSPs}). A \textsf{CSP} is specified by a constraint language $\Gamma$, that is, a finite set of $k$-ary relations over the domain $[q] = \{0, \dots, q-1\}$. An instance of $\mathsf{CSP}(\Gamma)$ consists of $m$ constraints over $n$ variables $x_1, \ldots, x_n$ taking values in $[q]$. Each constraint $C_i$ is of the form $\{R_i,(x_{i_1} + \lambda_{i_1}, \ldots, x_{i_k} + \lambda_{i_k})\}$, where $R_i \in \Gamma$ and $\lambda_{i_1}, \ldots, \lambda_{i_k} \in [q]$ are constants; it is satisfied if and only if $(x_{i_1} + \lambda_{i_1}, \ldots, x_{i_k} + \lambda_{i_k}) \in R_i$, where addition is modulo $q$. In the streaming model, constraints arrive one by one, and the goal is to determine, using minimum memory, whether there exists an assignment 
satisfying all constraints.

For $k$-\textsf{SAT}, Vu (TCS 2024) proves an optimal $\Omega(n^k)$ space lower bound, while for general \textsf{CSPs}, Chou, Golovnev, Sudan, and Velusamy (JACM 2024) establish an $\Omega(n)$ lower bound; a complete characterization has remained open. We close this gap by showing that the single-pass streaming space complexity of $\mathsf{CSP}(\Gamma)$ is precisely governed by its \emph{non-redundancy}, a structural parameter introduced by Bessiere, Carbonnel, and Katsirelos (AAAI 2020). The non-redundancy $\mathsf{NRD}_n(\Gamma)$ is the maximum number of constraints over $n$ variables such that every constraint is \emph{non-redundant}, i.e., omitting it strictly expands the set of satisfying assignments.
We prove that the single-pass streaming complexity of $\mathsf{CSP}(\Gamma)$ is characterized, up to a logarithmic factor, by $\mathsf{NRD}_n(\Gamma)$. We also extend this characterization to \emph{positive} Boolean \textsf{CSPs}, i.e., instances in which no additive shifts are applied, a class that includes graph $2$-colorability (equivalently, bipartiteness) as a canonical example.

A key ingredient in our lower bound proof is a binary relation $\mathcal{E}$ on the set of assignments $[q]^n$, where $(a, b) \in \mathcal{E}$ if every instance satisfied by $a$ is also satisfied by $b$. While it is immediate that $\mathcal{E}$ is reflexive and transitive, its symmetry, which would make it an equivalence relation, is non-trivial. We show that $\mathcal{E}$ is an equivalence relation for general \textsf{CSPs}, and for positive Boolean \textsf{CSPs} when excluding the two \emph{constant assignments} ($0^n$ and $1^n$). We believe this equivalence structure could be of independent interest.

\end{abstract}

\newpage
\tableofcontents

\section{Introduction}

Constraint satisfaction problems (\CSPs) provide a unifying framework that captures a wide range of computational problems, and serve as a central object of study in theoretical computer science. They have been studied extensively from the viewpoints of complexity theory 
\cite{Schaefer78,jeavons1997closure,feder1998computational,Hastad01,Khot02,bulatov2005classifying,Chen06SchaeferExposition,zhuk2020proof}, 
approximation \cite{khanna1997complete,khanna2001approximability,raghavendra2008optimal,austrin2009approximation,DBLP:conf/stoc/KhotTW14,CLRS16,GT17,DBLP:conf/focs/AlevJT19},
(approximate) sparsification 
\cite{kogan2015sketching,filtser2017sparsification,butti2020sparsification,KPS24,khanna2025efficient,khanna2025theory}
,
kernelization and exact sparsification \cite{dell2014satisfiability,jansen2019optimal,chen2020best,lagerkvist2020sparsification, bessiere2020chain,carbonnel2022redundancy,brakensiek2025redundancy,brakensiek2025richness,brakensiek2025tight}, and, more recently, streaming and sketching \cite{KKS15,GVV17,KK19,CGV20,CGS+22-linear-space,Sin23-kand,KolParamonovSaxenaYu-ITCS23,Vu-2208.09160,CGSV24,STV25,FMW26,FMW26-tight,ABF26,STV26}. Among the most basic objectives for a \CSP\ instance are deciding \emph{satisfiability} 
\cite{Schaefer78,feder1994network,jeavons1997closure,feder1998computational,Hastad01,Khot02,bulatov2005classifying,Chen06SchaeferExposition,zhuk2020proof,CGSV24,Vu-2208.09160}, 
maximizing the number of satisfied constraints (\emph{\maxcsp})
\cite{khanna1997complete,khanna2001approximability,Hastad01,lewin2002improved,Khot02,KKMO07,raghavendra2008optimal,austrin2009approximation,manurangsi2015approximating,fotakis2016sub,jeronimo2020unique,bafna2021playing,jeronimo2021near,KolParamonovSaxenaYu-ITCS23,CGSV24},
and minimizing the number of unsatisfied constraints\footnote{Although \maxcsp and \mincsp coincide in the exact optimization setting, their approximation behaviours are markedly different, as we discuss later.} (\emph{\mincsp})
\cite{khanna2001approximability,bazgan2003polynomial,ACMM05,cohen2006complexity,KS09,guruswami2011lasserre,Vu-2208.09160,als25,alms25}.
In the streaming literature, while \maxcsps have been extensively studied, with several sharp dichotomy characterizations \cite{CGV20,KolParamonovSaxenaYu-ITCS23,CGSV24,FMW26,STV26,ABF26,FMW26-tight}, 
the satisfiability and \mincsp variants remain relatively under-explored. In this work, we study the streaming satisfiability problem for finite-domain \CSPs.

A constraint satisfaction problem is defined over a finite domain $[q] := \{0,1,\dots,q-1\}$ and specified by a finite constraint language $\Gamma$ consisting of $k$-ary relations over $[q]$\footnote{We think of $k$ and $q$ as constants.}. An instance of $\CSP(\Gamma)$ consists of $n$ variables $x_1,\dots,x_n$ taking values in $[q]$ and $m$ constraints $C_1,\dots,C_m$, where each $C_i$ has the form $\{R_i,(x_{i_1}+\lambda_{i_1},\dots,x_{i_k}+\lambda_{i_k})\}$ for some relation $R_i \in \Gamma$, $k$-tuple of variables $(x_{i_1},\dots,x_{i_k})$, and constants $\lambda_{i_1},\dots,\lambda_{i_k} \in [q]$. An assignment $a \in [q]^n$ satisfies $C_i$ if $(a_{i_1}+\lambda_{i_1},\dots,a_{i_k}+\lambda_{i_k}) \in R_i$ (with all arithmetic modulo $q$), and satisfies the instance if it satisfies every constraint.
We say that an instance is satisfiable if there exists an assignment that satisfies it.
In the streaming model, constraints arrive one by one, and the 
goal is to decide if the instance is satisfiable, using as little memory as 
possible. 
A deterministic streaming algorithm must succeed on every input, while a randomized 
algorithm is required to succeed with probability at least $2/3$ on every input.
Since any instance has at most $O(n^k)$ constraints, 
storing them all gives a trivial $\widetilde{O}(n^k)$ upper bound. For \ksat, Vu~\cite{Vu-2208.09160} 
proved a matching $\Omega(n^k)$ lower bound, showing this is essentially tight. For 
more general \csps, however, the best known lower bound is $\Omega(n)$, due to Chou, 
Golovnev, Sudan, and Velusamy~\cite{CGSV24}. This gap motivates our central question:

\begin{center}
\emph{Can we completely characterize the single-pass streaming complexity of deciding 
satisfiability of \CSPs?}
\end{center}

We answer this in the affirmative, obtaining a complete characterization in terms of 
a structural parameter called \emph{non-redundancy}. For $\CSP(\Gamma)$, the 
non-redundancy $\nrdg$ is the size of the largest instance over $n$ variables 
in which every constraint is \emph{non-redundant}, i.e., removing any single constraint 
strictly increases the number of satisfying assignments. Equivalently, every constraint 
admits a \emph{witness} assignment that violates it but satisfies all other constraints.

As a simple example, $\nrdx{\twosat} = \Omega(n^2)$. Consider a complete graph on 
$n$ vertices, with each edge $(i,j)$ defining the constraint $x_i \vee x_j$.\footnote{In our notation, \twosat~is specified by the single binary relation $R = \{(0,1),(1,0),(1,1)\}$, and an OR constraint such as $x_i \vee \neg x_j$ is expressed as $\{R, (x_i, x_j + 1)\}$, where the shift of $1$ on $x_j$ encodes the negation $\neg x_j$.} The 
assignment $x_i = x_j = 0$ and $x_k = 1$ for all $k \neq i,j$ satisfies every other 
constraint but violates this one. The same construction also generalizes naturally to 
$k$-\textsf{SAT}. Thus, Vu's result is a special case of our main theorem.

\begin{restatable}[Main result]{theorem}{mainresult}\label{thm:main}
For every constraint language $\Gamma$ containing $k$-ary relations  over $[q]$, to solve the satisfiability of \cspg, there exists:
\begin{enumerate}[label=(\alph*)]
    \item\label{main-a} a deterministic  streaming algorithm in $O_{q,k}(\nrdg \log n)$ space, and
    \item\label{main-b} no randomized streaming algorithm in $o(\nrdg)$ space.
\end{enumerate}
\end{restatable}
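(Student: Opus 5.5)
For the upper bound (a), the algorithm maintains a non-redundant instance that is equivalent to the prefix of the stream read so far. We keep a set $S$ of stored constraints with the invariant that $S$ has the same satisfying assignments as every constraint seen so far. When a new constraint $C$ arrives, we check whether $\mathrm{Sat}(S)\subseteq \mathrm{Sat}(C)$. This can be done by brute force over all of $[q]^n$, because only space is charged; the enumeration needs $O(n\log q)$ working space. If $C$ is implied, we discard it. Otherwise we add $C$ to $S$, and then repeatedly delete any stored constraint that is implied by the others. After each step $S$ is non-redundant, so $|S|\le \nrdg$. Storing a constraint costs $O(k\log n+\log|\Gamma|+k\log q)$ bits, which gives $O_{q,k}(\nrdg\log n)$ space. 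At the end we decide satisfiability of $S$ by brute force. The invariant holds because deleting a constraint implied by the rest never changes the solution set.

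For the lower bound (b), we would reduce from the communication problem \textsf{Index}. Fix a maximum non-redundant instance $C_1,\dots,C_N$ with $N=\nrdg$, and for each $i$ a witness $w_i$ that violates $C_i$ and satisfies every $C_j$ with $j\neq i$.
\begin{itemize}
\item Alice holds $x\in\{0,1\}^N$ and streams the set $\{C_i : x_i=1\}$.
\item Bob holds an index $i$. He wants to append a short gadget $G_i$ of constraints whose only satisfying assignment is $w_i$ (or an assignment equivalent to $w_i$ in a suitable sense).
\end{itemize}
With such a gadget, the combined instance is satisfiable iff $x_i=0$: if $x_i=0$ then $w_i$ satisfies every constraint Alice sent, and if $x_i=1$ then $w_i$ violates $C_i$. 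The randomized one-way complexity of \textsf{Index} then gives $\Omega(N)$ space.

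The main obstacle is that Bob cannot in general pin the variables to $w_i$ with constraints from $\Gamma$. We would get around this with the relation $\mathcal{E}$ from the abstract: $(a,b)\in\mathcal{E}$ iff every instance satisfied by $a$ is satisfied by $b$.
\begin{itemize}
\item The natural choice for Bob is the set of all constraints satisfied by $w_i$. Its solution set is exactly $\{b : (w_i,b)\in\mathcal{E}\}$, and this set has size $O(n^k)$, which is too large to send.
\item The key structural step is to show that $\mathcal{E}$ is symmetric, hence an equivalence relation. We would try to prove this using the shifts $\lambda$, which let every constraint be translated. The idea is that if $a$ satisfies everything $b$ does but not conversely, a constraint separating them can be shifted and combined to derive a contradiction.
\item Given symmetry, any $b$ satisfying Bob's gadget satisfies exactly the same constraints as $w_i$. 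In particular $b$ satisfies all $C_j$ with $j\neq i$ and violates $C_i$, which is what correctness needs.
\end{itemize}

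To keep Bob's part cheap, note that a streaming lower bound only needs Bob's constraints to be appended after Alice's. The memory state after Alice's stream is then the message, and Bob's gadget may be arbitrarily long. So correctness reduces to the symmetry of $\mathcal{E}$, and I expect proving that symmetry to be the main difficulty.
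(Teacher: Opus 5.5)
Your upper bound is the paper's algorithm: maintain an equivalent non-redundant sub-instance and test redundancy by brute force. One justification is missing. The $O(n\log q)$ enumeration workspace fits inside $O_{q,k}(\nrdg\log n)$ only because $\nrdg\ge\lfloor n/k\rfloor$, which follows by placing a proper non-empty relation on $\lfloor n/k\rfloor$ disjoint blocks of variables. Your lower-bound reduction is also the paper's. Your suffix (all constraints satisfied by $w_i$) is a valid substitute for the paper's $P_a$: its solution set is exactly $\{b : (w_i,b)\in\mathcal{E}\}$, and you are right that Bob's suffix may be arbitrarily long.

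The gap is that you never prove $\mathcal{E}$ is symmetric, and this is the only non-trivial step. The phrase ``a separating constraint can be shifted and combined to derive a contradiction'' does not amount to an argument. Symmetry is not automatic: it fails for positive CSPs over $[3]$ and $[4]$ (the paper's counterexamples), precisely because shifts are unavailable there.

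The missing idea is a finiteness argument for the shift action. For $\Delta\in[q]^n$, let $\tau_\Delta$ add $\Delta_{i_j}$ to each shift $\lambda_{i_j}$ of a constraint. Then $z$ satisfies $\tau_\Delta(D)$ iff $z+\Delta$ satisfies $D$, and $\tau_\Delta$ is a permutation (with inverse $\tau_{-\Delta}$) of the finite set of constraints on $n$ variables. Take $\Delta=a-b$. Then $\tau_\Delta$ maps the set of constraints satisfied by $a$ bijectively onto the set satisfied by $b$, so these two finite sets have the same size. The hypothesis $(a,b)\in\mathcal{E}$ says the first set is contained in the second, so they are equal, and hence $(b,a)\in\mathcal{E}$.

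Equivalently, in your ``shift a separating constraint'' form, let $\sigma=\tau_{b-a}$ and suppose $b$ satisfies $C$. Then $a$ satisfies $\sigma(C)$, hence by $\mathcal{E}$ so does $b$, hence $a$ satisfies $\sigma^2(C)$, and so on. Since $\sigma^q$ is the identity, $a$ satisfies $C$. With this lemma added, your proof is complete and coincides with the paper's.
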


Non-redundancy also plays a central role in the sparsification theory of \csps. The 
concept was first introduced by Bessiere, Carbonnel, and Katsirelos \cite{bessiere2020chain} in the 
context of learning \csp instances from membership queries, and has since emerged as a 
fundamental structural parameter underlying exact sparsification and kernelization 
\cite{dell2014satisfiability,jansen2019optimal,chen2020best,lagerkvist2020sparsification,carbonnel2022redundancy}. 
Very recently, it was also shown to govern approximate sparsification up to 
polylogarithmic factors \cite{brakensiek2025redundancy,brakensiek2025richness,brakensiek2025tight}. 
Our work adds to this growing line of research by showing that non-redundancy also 
characterizes the streaming complexity of deciding \csp satisfiability.

We also extend our characterization to \emph{positive} Boolean \CSPs, where constraints 
are applied only to variables and not their negations. This well-studied special class 
captures problems such as $2$-colorability that fall outside the previous definition. 
While the upper bound follows immediately from \Cref{thm:main}, the lower bound requires 
a different argument, as constraints can no longer be applied to negated variables.
An additional subtlety arises in this setting: without literals, satisfiability can 
become trivial for some constraint languages. For example, consider positive \twolin, 
which contains the single binary relation $R = \{(1,1),(0,0)\}$. Although 
$\NRD_n(\twolin) = \Theta(n)$, every positive instance of \twolin~is trivially 
satisfied by the constant assignments $0^n$ and $1^n$. Excluding \emph{constant-satisfiable} 
languages, where every instance is satisfied by some constant assignment, we still 
obtain a tight characterization based on non-redundancy. We formally state our result 
below.

\begin{theorem}
For every non constant-satisfiable Boolean constraint language $\Gamma$ containing $k$-ary Boolean relations, no randomized streaming 
algorithm can decide satisfiability of positive $\CSP(\Gamma)$ in 
$o(\nrdg)$ space.
\end{theorem}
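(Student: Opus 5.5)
We plan a reduction from the Index problem in one-way communication, with Alice holding a subset of a maximum non-redundant instance and Bob holding a short instance that tests membership of one constraint. Fix a non-redundant positive instance $\mathcal{I}=\{C_1,\dots,C_N\}$ over $n$ variables with $N=\NRD_n(\Gamma)$, and for each $C_j$ a witness assignment $w^{(j)}$ that violates $C_j$ but satisfies all other $C_\ell$. Alice receives $S\subseteq[N]$ (equivalently $x\in\{0,1\}^N$) and streams the constraints $\{C_\ell:\ell\in S\}$. Bob receives an index $j$ and must append constraints, over $O(n)$ fresh and old variables, forming an instance $\mathcal{B}_j$ that is satisfied by $w^{(j)}$ (extended to the fresh variables) and by nothing that satisfies $C_j$. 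Then $\mathcal{I}_S\cup\mathcal{B}_j$ is satisfiable iff $j\notin S$. The randomized $\Omega(N)$ lower bound for Index then gives the $o(\nrdg)$ impossibility.

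The first point is that ``nothing satisfying $C_j$'' is too strong. What we actually need is weaker: every assignment satisfying $\mathcal{I}_S$ with $j\in S$ must violate $\mathcal{B}_j$. This is where the relation $\mathcal{E}$ from the abstract enters. Define $(a,b)\in\mathcal{E}$ if every positive instance satisfied by $a$ is also satisfied by $b$. I will use the stated result that, on non-constant assignments, $\mathcal{E}$ is an equivalence relation for positive Boolean \CSPs; I also plan to use it without proof if it is stated earlier. Bob's gadget is then the set of \emph{all} positive constraints (over the original $n$ variables) satisfied by $w^{(j)}$. Its satisfying set is exactly the $\mathcal{E}$-upward closure of $w^{(j)}$, which by symmetry equals the $\mathcal{E}$-class of $w^{(j)}$. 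If some $b$ in this class satisfied $C_j$, then symmetry gives $(b,w^{(j)})\in\mathcal{E}$, so $w^{(j)}$ would satisfy $C_j$, a contradiction. The same holds for all of $\mathcal{I}_S$. Bob's instance may have size $n^k$, but that is fine, since Bob's message cost is not counted.

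The main obstacle is the constant assignments. If $w^{(j)}$ is $0^n$ or $1^n$, the symmetry argument fails, and $0^n$ or $1^n$ might satisfy everything. Two fixes are needed. First, we must ensure that a constant assignment never satisfies the combined instance. Non-constant-satisfiability of $\Gamma$ gives a fixed positive instance $\mathcal{U}$, on $O(1)$ variables, that is satisfied by neither $0$ nor $1$ constants. I would use $\mathcal{U}$ to forbid constant assignments, for example by adding fresh variables, or by a blow-up in which each original variable is duplicated into a block. Second, we must ensure that witnesses can be taken non-constant. At most two constraints have constant witnesses, since each witness violates only its own constraint, so we can simply discard those constraints, losing $O(1)$ in $N$. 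The delicate part is that gluing $\mathcal{U}$ must not break non-redundancy or the witnesses. I would try to pick the fresh-variable assignment so that it satisfies $\mathcal{U}$ while the original variables follow $w^{(j)}$; because $\mathcal{U}$ uses disjoint variables, this is consistent. I would check that any satisfying assignment of the glued instance that is constant on the original block still falls into the non-constant equivalence-class argument when we apply $\mathcal{E}$ to the whole joint variable set, which is itself non-constant. This is a short but careful case analysis, and I expect it to be the crux.
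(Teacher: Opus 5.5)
Your architecture matches the paper's: an Index reduction over a maximum non-redundant instance, discarding the at most two constraints that have a constant witness, and a suffix whose satisfying set is an $\mathcal{E}$-class. However, the one step that actually uses non-constant-satisfiability is left open, and the fix you sketch for it does not work as stated. You claim that $\Sat(\mathcal{B}_j)$, the $\mathcal{E}$-upward closure of $w^{(j)}$, ``by symmetry equals the $\mathcal{E}$-class'' of $w^{(j)}$. This needs $\Sat(\mathcal{B}_j)$ to contain neither $0^n$ nor $1^n$, because symmetry is available only among non-constant assignments. If, say, $0^n\in\Sat(\mathcal{B}_j)$ and $0^n$ satisfies $\mathcal{I}_S$, the reduction breaks when $j\in S$. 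Gluing $\mathcal{U}$ on fresh variables disjoint from the original ones does not rule this out. Both $\mathcal{B}_j$ and $\mathcal{I}_S$ involve only original variables, so an assignment that is $0^n$ on the original block and satisfies $\mathcal{U}$ on the fresh block satisfies the glued instance exactly when $0^n$ satisfies $\mathcal{I}_S\cup\mathcal{B}_j$. That the joint assignment is non-constant does not help. Since $\mathcal{B}_j$ contains no constraint mixing fresh and original variables, its satisfying set is not an $\mathcal{E}$-upward closure on the joint variable set, and the symmetry theorem cannot be applied to it. The ``careful case analysis'' you defer is therefore precisely the missing argument.

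The missing idea is to place the certificate of non-constant-satisfiability on two coordinates of the witness itself, not on fresh variables. After discarding, $w^{(j)}$ is non-constant, so pick $p,q$ with $w^{(j)}_p=0$ and $w^{(j)}_q=1$. Take $R^{(0)},R^{(1)}\in\Gamma$ with $0^k\notin R^{(0)}$ and $1^k\notin R^{(1)}$, and tuples $u\in R^{(0)}$, $v\in R^{(1)}$. Apply $R^{(0)}$ to the scope that has $x_p$ in the positions where $u_t=0$ and $x_q$ where $u_t=1$, and apply $R^{(1)}$ likewise with $v$. Under $w^{(j)}$ these scopes read exactly $u$ and $v$, while $0^n$ reads $0^k\notin R^{(0)}$ and $1^n$ reads $1^k\notin R^{(1)}$. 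This is the instance $H_a$ in the proof of Lemma~\ref{lem:constraint-suffix-boolean}. Your $\mathcal{B}_j$ contains every positive constraint satisfied by $w^{(j)}$, so it already contains these two. Hence $\Sat(\mathcal{B}_j)$ has no constant assignment, your symmetry argument then goes through verbatim, and no fresh variables are needed. If you insist on fresh variables, $\mathcal{B}_j$ must instead be all positive constraints over the joint variable set satisfied by the extended, non-constant witness, mixed constraints included. That version also works, even without discarding, but the hard instances then live on $n+O(1)$ variables, and you must relate the non-redundancy at the two sizes.
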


We now discuss how the different \csp objectives relate to one another and how our 
results compare to those obtained under other objectives.

We first consider the \maxcsp objective, where the goal is to find an assignment 
maximizing the number of satisfied constraints. In the polynomial time setting, 
satisfiability can be significantly easier than \maxcsp. For example, \maxx-\twosat~is \textsf{NP}-hard while satisfiability of a \twosat~instance is decidable in polynomial 
time. Our results show for the first time that an analogous phenomenon occurs in 
the streaming setting, though interestingly not for \twosat.\footnote{Surprisingly, 
the streaming complexities of decidability of \twosat~and \maxx-\twosat~coincide.} Specifically, Kol, Paramonov, Saxena, and 
Yu \cite{KolParamonovSaxenaYu-ITCS23} give a complete characterization of the streaming complexity of \maxcsps, showing that for \maxx-$k$-\lin, the \csp where constraints correspond to 
linear equations over $\mathbb{F}_2$, the space complexity is $\Theta(n^k)$, up to logarithmic factors. In contrast, we show that 
satisfiability of $k$-\lin can be decided deterministically in $\widetilde{O}(n)$ space.

Next, we consider the \minn-\csp objective, where the goal is to find an assignment 
minimizing the number of unsatisfied constraints. Although \mincsp and \maxcsp are 
equivalent in the \emph{exact} setting, since minimizing unsatisfied constraints is 
the same as maximizing satisfied ones, they differ significantly in the 
\emph{approximation} setting.
For $\alpha \in (0,1)$, a deterministic $\alpha$-approximation algorithm for \maxcsp 
outputs, on every instance, an assignment satisfying at least an $\alpha$-fraction of 
the maximum number of satisfiable constraints. A randomized approximation algorithm is 
defined analogously, except it is only required to succeed with probability at least 
$2/3$ on every input. For \mincsp, an $\alpha$-approximation algorithm outputs an 
assignment that unsatisfies at most a $1/\alpha$-fraction of the minimum number of 
unsatisfied constraints. Observe that if $\alpha$ is non-zero and the input instance 
is satisfiable, then the assignment output by the $\alpha$-approximation algorithm for 
\mincsp must itself be satisfying. Thus, \mincsp is at least as hard as deciding 
satisfiability, and all our lower bounds immediately carry over to any non-trivial 
approximation of \mincsp.

While a standard edge-sampling algorithm achieves an arbitrary constant-factor 
approximation for \maxcsp in linear space \cite{AG09}, we show for several \mincsp 
problems that any non-trivial approximation requires \emph{super-linear} space. 
For instance, graph $q$-coloring (equivalently, the relation $x \neq y$ over $[q]$ 
for $q \geq 3$), as well as other graph and digraph homomorphism problems, satisfy 
$\NRD_n(\Gamma) = \Theta(n^2)$ \cite{bessiere2020chain,carbonnel2022redundancy}, and 
thus admit quadratic space lower bounds.

Finally, we observe that even for approximating \minn-$k$-\lin, our lower bound is 
essentially tight. Specifically, a streaming algorithm for code sparsification due to 
Kuchlous~\cite{kuchlous2025} can be adapted to obtain $(1-\epsilon)$-approximation for \minn-$k$-\lin in near-linear space for every constant $\epsilon > 0$.
Whether non-redundancy also governs the streaming approximability of general \mincsps remains an interesting open problem.

\subsection{Technical overview}\label{sec:tech-overview}
We now give a brief technical overview of the proofs of our upper and lower bounds.

\paragraph*{Upper bound.} We give a simple deterministic greedy algorithm that maintains a non-redundant sub-instance with the same set of satisfying assignments as the original instance. Upon the arrival of each constraint, the algorithm checks 
whether it is redundant with respect to the stored sub-instance, and retains it if not. Since the number of stored constraints cannot exceed the non-redundancy of the \csp, the upper bound follows immediately.

\paragraph*{Lower bound.} The lower bound proof is technically more involved. The 
standard approach for proving streaming lower bounds is via a reduction from a hard 
one-way communication problem. 
A space-efficient streaming algorithm can be converted 
into an efficient one-way communication protocol with communication complexity 
matching the space complexity of the streaming algorithm. We obtain our reduction from 
one of the most well-studied problems in communication complexity, the \Index problem. 
Here, Alice receives an $N$-bit string $S$ and Bob receives an index $i \in [N]$; 
Alice sends a single message to Bob, who must then output the bit $S_i$. It is 
well-known that any randomized one-way protocol for \Index that succeeds with 
probability at least $2/3$ requires $\Omega(N)$ bits of 
communication~\cite{Index1,Index2}.

As a starting point, we describe our reduction for \twosat. We consider the \Index 
problem with $N = \binom{n}{2}$ and interpret Alice's input as a graph $G$ on $n$ 
vertices and Bob's input as an edge query $(i,j)$. Together, they must determine 
whether $(i,j)$ is an edge in $G$. We reduce this to the streaming decidability of 
\twosat~as follows. Alice constructs an instance $I_G$ of \twosat\ on $n$ variables by adding the 
constraint $x_u \vee x_v$ for every edge $(u,v)$ in $G$. Bob constructs a 
two-constraint instance $I_{(i,j)}$ consisting of $\neg x_i \vee \neg x_i$ and 
$\neg x_j \vee \neg x_j$, which is satisfied if and only if $x_i = x_j = 0$. Thus, 
$I_G \cup I_{(i,j)}$\footnote{Here, we are taking the union of all the constraints.} is satisfiable if and only if $(i,j)$ is not an edge in $G$, 
and therefore any randomized streaming algorithm deciding satisfiability of \twosat~must use $\Omega(n^2)$ space.
We note that Vu~\cite{Vu-2208.09160} also proves an optimal lower bound for \ksat~via a reduction from the \Index problem, though his reduction is different and 
tailored\footnote{While the reduction in \cite{CGSV24} is more general than that of \cite{Vu-2208.09160}, it relies on a different communication problem, namely disjointness, and is incomparable to ours. Moreover, their approach yields only an $\Omega(n)$ lower bound.} more specifically to \ksat. The reduction described above, by contrast, 
generalizes naturally to arbitrary \csps, as we now describe.

Recall that a complete graph on $n$ vertices, with each edge $(i,j)$ defining the 
constraint $x_i \vee x_j$, is a non-redundant instance of \twosat; call this instance 
$I_{K_n}$. Alice's input can thus be viewed as a sub-instance $I_G$ of $I_{K_n}$, 
and Bob's input as a constraint $C$ in $I_{K_n}$. In the reduction, Bob constructs an 
instance $I_C$ such that $I_G \cup I_C$ is satisfiable if and only if $C \notin I_G$.
We show that an analogous reduction is feasible for general \csps. Specifically, we 
start with the largest non-redundant instance $I_n$ of $\CSP(\Gamma)$ on $n$ variables, where Alice 
receives a sub-instance $I$ of $I_n$ and Bob receives a constraint $C$ in $I_n$. We 
then demonstrate that Bob can always construct an instance $I_C$ such that $I \cup I_C$ 
is satisfiable if and only if $C \notin I$. Since $I_n$ is non-redundant, every constraint 
$C$ in $I_n$ admits a witness assignment $a_C \in [q]^n$ that unsatisfies $C$ but 
satisfies every other constraint in $I_n$. If Bob can construct an instance $I_C$ whose 
\emph{unique} satisfying assignment is $a_C$, then we are done. While this is possible 
for \twosat, it need not hold for other \csps. For example, for \twolin, if $a$ is a 
satisfying assignment then its complement $\overline{a}$ is also satisfying. This is 
not an obstacle, however, if $a$ is a witness assignment then so is $\overline{a}$, 
and the same reduction goes through.

For general CSPs, this motivates us to define a binary relation $\equivr$ on the set of assignments 
$[q]^n$, where $(a, b) \in \equivr$ if every instance satisfied by $a$ is also 
satisfied by $b$. While reflexivity and transitivity are immediate, symmetry, which 
would make $\equivr$ an equivalence relation, is not. We establish the following.

\begin{theorem}[informal, restatement of \Cref{thm:equiv-lit,thm:equiv-boolean-no-lit}]
\enspace
\begin{itemize}
    \item For general \CSPs, $\equivr$ is an equivalence relation on all 
    assignments;
    \item For positive Boolean \CSPs, $\equivr$ is an equivalence relation when 
    restricted to non-constant assignments.
\end{itemize}
\end{theorem}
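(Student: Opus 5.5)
\emph{General \CSPs.} The plan is to reduce the relation $\equivr$ to a statement about single constraints, and then use a counting argument. An instance is satisfied by an assignment exactly when each of its constraints is. So $(a,b)\in\equivr$ holds if and only if every single constraint $\{R,(x_{i_1}+\lambda_1,\dots,x_{i_k}+\lambda_k)\}$ satisfied by $a$ is also satisfied by $b$. This holds whether or not the $i_j$ are distinct.

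Fix $R\in\Gamma$ and an index tuple $\mathbf{i}=(i_1,\dots,i_k)$, and write $a_{\mathbf i}=(a_{i_1},\dots,a_{i_k})$. The constraint is satisfied by $a$ iff $\lambda\in R-a_{\mathbf i}$. The condition for this fixed $R$ and $\mathbf i$ is therefore $R-a_{\mathbf i}\subseteq R-b_{\mathbf i}$, equivalently $R+d\subseteq R$ with $d=b_{\mathbf i}-a_{\mathbf i}\in\mathbb{Z}_q^k$. Translation by $d$ is a bijection of the finite set $[q]^k$, so $|R+d|=|R|$ and the inclusion is an equality. Hence $R-d=R$, i.e.\ $R-b_{\mathbf i}\subseteq R-a_{\mathbf i}$. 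Ranging over all $R$ and $\mathbf i$ gives $(b,a)\in\equivr$, which is symmetry. Reflexivity and transitivity are immediate, so this part is short.

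\emph{Positive Boolean \CSPs.} Without shifts, the per-constraint condition becomes: for all $R\in\Gamma$ and all index tuples $\mathbf i$, $a_{\mathbf i}\in R\Rightarrow b_{\mathbf i}\in R$. Let $a,b$ be non-constant, and set $P_c=\{u:a_u=c\}$ and $B_c=\{b_u:u\in P_c\}$ for $c\in\{0,1\}$. Both $P_0$ and $P_1$ are nonempty, so every pattern $t\in\{0,1\}^k$ is realized as $a_{\mathbf i}$, and repeated indices are allowed. The set of images $b_{\mathbf i}$ over all $\mathbf i$ with $a_{\mathbf i}=t$ is exactly $\prod_j B_{t_j}$. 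Thus $(a,b)\in\equivr$ says that, for every $R\in\Gamma$, $t\in R$ implies $\prod_j B_{t_j}\subseteq R$.

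I then split on the nonempty sets $B_0,B_1$. Since $b$ is non-constant, the cases $B_0=B_1=\{0\}$ and $B_0=B_1=\{1\}$ cannot occur. The remaining cases are as follows.
\begin{itemize}
\item If $B_0=\{0\}$ and $B_1=\{1\}$, then $b=a$.
\item If $B_0=\{1\}$ and $B_1=\{0\}$, then $b=\overline a$. The condition says complementation maps each $R$ into itself, hence onto itself since it is a bijection. Symmetry follows as in the shifted case.
\item In each case where some $B_c=\{0,1\}$, the condition forces every nonempty $R\in\Gamma$ to be closed in a strong way. For example, $B_0=\{0,1\}$ and $B_1=\{1\}$ makes every $R$ upward closed. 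Then $1^n$ satisfies every instance whose relations are nonempty. Similarly, $B_0=\{0,1\}$ and $B_1=\{0\}$ forces $0^k\in R$ for every nonempty $R$, and the other cases are analogous.
\end{itemize}

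The main obstacle is exactly these degenerate cases. They are not ruled out by $a,b$ being non-constant alone: for positive OR, $a=(0,0,1)$ and $b=(0,1,1)$ appear to give $(a,b)\in\equivr$ without $(b,a)\in\equivr$. So I would prove the positive statement under the hypothesis of the paper's positive theorem, namely that $\Gamma$ is not constant-satisfiable. I would then show that each case with some $B_c=\{0,1\}$ makes $\Gamma$ constant-satisfiable, using the closure properties just listed. Empty relations need separate care, since they are violated by every assignment and so never constrain $\equivr$. This leaves only $b=a$ and $b=\overline a$, for which symmetry was established above.
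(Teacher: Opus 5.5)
Your proof is correct, but it is organized differently from the paper's.

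For general \CSPs, the paper counts whole instances. The shift $\tau_{a-b}$ maps the finite set $\mathcal{S}_a$ of instances satisfied by $a$ bijectively onto $\mathcal{S}_b$, so $\mathcal{S}_a\subseteq\mathcal{S}_b$ forces equality. You instead reduce $\equivr$ to single constraints and apply the same translation-bijection argument to the shift sets $R-a_{\mathbf i}$ in $\mathbb{Z}_q^k$. This is the same idea localized, and it gives an explicit, visibly symmetric description of $\equivr$: $R+(b_{\mathbf i}-a_{\mathbf i})=R$ for all $R$ and $\mathbf i$.

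For positive Boolean \CSPs, your pair $(B_0,B_1)$ encodes exactly the paper's set $P(a,b)=\{(a_j,b_j)\}$, but the case split differs. The paper only asks whether $D_{01}$ and $D_{10}$ are empty. It uses non-constant-satisfiability only in the two one-sided cases, and when both are nonempty it gets symmetry for free from $P(a,b)=P(b,a)$. You instead use the hypothesis to eliminate every configuration except $b\in\{a,\overline a\}$. That takes a few more cases, but it also proves the paper's follow-up Remark that every non-constant class is $\{a\}$ or $\{a,\overline a\}$.

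Both of your caveats are justified. The informal statement does need the non-constant-satisfiable hypothesis of the formal theorem, and your positive OR example with $a=(0,0,1)$, $b=(0,1,1)$ is valid. The paper handles empty relations through its standing assumption that every relation in $\Gamma$ is nonempty. That assumption is genuinely needed: adjoining $\emptyset$ to positive OR makes the language vacuously non-constant-satisfiable without changing $\equivr$. So rather than leaving this as ``needs separate care,'' you should state that assumption explicitly at the point where you use it.
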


With the above theorem, the reduction for general \csps follows immediately. Bob 
constructs an instance $I_C$, which is the union of all instances that have $a_C$ as a satisfying assignment. By 
the equivalence structure of $\equivr$, it is not hard to show that the satisfying 
assignments of $I_C$ are exactly the equivalence class of $a_C$, and the reduction 
follows. The argument is slightly more subtle for positive Boolean \csps, since the 
equivalence relation does not hold for all assignments. Nevertheless, this case is 
manageable. Observe that no two constraints in a non-redundant instance can share the same witness assignment, 
by definition. Therefore, we can remove the constraints whose witness assignment is 
$0^n$ or $1^n$, yielding a non-redundant instance of asymptotically the same size, 
for which the same reduction goes through.

Finally, we show in \cref{sec:counter-gen-nolit} that our reductions do not extend to positive \CSPs over larger 
domains, i.e., $q>2$. The failure is not merely a limitation of our proof technique via the relation 
$\equivr$; rather, the lower bound itself is false in general. We provide two 
counterexamples over non-Boolean domains: one using a non-singleton language 
(\Cref{sec:counter-anyarity}) and another using a singleton language 
(\Cref{sec:counter-singleton}), for which single-pass streaming satisfiability is 
strictly easier than $\nrdg$ and the relation $\equivr$ fails to be symmetric.

\subsection{Organization}
We begin in \Cref{sec:prelims} with notation and basic definitions. In \Cref{sec:results}, we state our main results and the general streaming upper bounds. The lower bound for general \CSPs is proved in \Cref{sec:lower-gen}, where we establish the equivalence-class structure of the relation $\equivr$ and derive the streaming lower bound from it. We then turn in \Cref{sec:lower-bool-nolit} to positive Boolean \CSPs, proving that $\equivr$ remains an equivalence relation on non-constant assignments and using this to obtain the corresponding lower bound. In \Cref{sec:counter-gen-nolit}, we show that this phenomenon does not extend to positive \CSPs on larger alphabets.

\section{Preliminaries and notation}\label{sec:prelims}

\subsection{Constraint languages and instances}

Fix a domain $[q] := \{0,1,\dots,q-1\}$ for some integer $q \ge 2$. A constraint language $\Gamma$ over $[q]$ is a set of relations over $[q]$. Let $k$ denote the maximum arity of a relation in $\Gamma$. By padding each relation $R \subseteq [q]^\ell$ with $\ell < k$ to $\widehat{R} := R \times [q]^{\,k-\ell} \subseteq [q]^k$, and using padded variables in constraint scopes, we may assume without loss of generality that every relation in $\Gamma$ has arity exactly $k$. Throughout the paper, we only consider finite constraint languages of bounded arity, i.e., languages for which the maximum arity $k$ is bounded by a constant.

An instance $I$ of $\CSP(\Gamma)$ on variables $x_1,\dots,x_n$ consists of a set of $m$ constraints. For each $i \in [m]$, the constraint $C_i$ is of the form $\{R_i,(x_{i_1}+\lambda_{i_1},\dots,x_{i_k}+\lambda_{i_k})\}$, where $R_i \in \Gamma$, $(i_1,\dots,i_k)\in [n]^k$, and $\lambda_{i_1},\dots,\lambda_{i_k} \in [q]$, with all arithmetic modulo $q$. An assignment $a \in [q]^n$ satisfies such a constraint if $(a_{i_1}+\lambda_{i_1},\dots,a_{i_k}+\lambda_{i_k})\in R_i$. We write $\Sat(I)\subseteq [q]^n$ for the set of satisfying assignments of $I$. Since duplicate constraints do not affect satisfiability, we identify an instance with the set of its constraints. Unless explicitly stated otherwise, all \CSP instances in this paper are stated in this general model.

We call a constraint language $\Gamma$ \emph{non-trivial} if every relation in $\Gamma$ is non-empty (i.e., not trivially unsatisfiable) and at least one relation in $\Gamma$ is proper (i.e., not trivially satisfiable\footnote{A relation is trivially satisfiable if it is satisfied by every assignment.}). For streaming satisfiability, this assumption is without loss of generality because constraints using trivially satisfiable relations can simply be ignored, while the appearance of a constraint using an empty relation makes the instance immediately unsatisfiable. Thus, we restrict our attention to non-trivial languages throughout the paper.



\paragraph{Positive \CSPs.}  We call a \CSP a \emph{positive \CSP} if every constraint $C_i$ is applied with the all the shifts $\lambda_{i_j}$ set to zero, so every constraint has the form $\{R,(x_{i_1},\dots,x_{i_k})\}$.
We say that for a positive \cspg, the underlying constraint language $\Gamma$ is \emph{constant-satisfiable} if there exists some $\alpha \in [q]$ such that $\alpha^k\in R$ for every relation $R \in \Gamma$. Equivalently, the all-$\alpha$ assignment trivially satisfies every instance of positive $\CSP(\Gamma)$. If no such $\alpha$ exists, then $\Gamma$ is \emph{non constant-satisfiable}. In our lower bound proof for positive \CSPs, we will assume that $\Gamma$ is non constant-satisfiable (note this is not a default assumption for general \csps).




\subsection{Equivalent instances, redundancy, and witnesses}
Two instances $I$ and $J$ of $\CSP(\Gamma)$ are \emph{equivalent} if $\Sat(I)=\Sat(J)$. A constraint $C\in I$ is \emph{redundant} if deleting it does not change the set of satisfying assignments, that is, $\Sat(I\setminus\{C\})=\Sat(I)$. Equivalently, $C$ is redundant if every assignment satisfying all constraints in $I\setminus\{C\}$ also satisfies $C$. We say that $I$ is non-redundant if none of its constraints is redundant.

The main combinatorial parameter in this paper is the maximum size of a non-redundant instance on $n$ variables. For a constraint language $\Gamma$, we define 

\[\nrdg:=\max\{|I| : I \text{ is a \emph{non-redundant instance of }}  \CSP(\Gamma)\text{ on } n \text{ variables}\}.\]

If $I=\{C_1,\dots,C_m\}$ is non-redundant, then for each $i\in[m]$ there exists an assignment that satisfies every constraint in $I\setminus\{C_i\}$ but unsatisfies $C_i$. We call any such assignment a \emph{witness assignment} for $C_i$. Note that any sub-instance of a non-redundant instance is also non-redundant.

Any non-trivial $\Gamma$ contains a relation $R$ with $\emptyset \neq R \subsetneq [q]^k$. 
We can partition the variables into $ \lfloor n/k \rfloor$ pairwise disjoint $k$-tuples, and apply the relation $R$ on each tuple. It is easy to see that each constraint in this set is non-redundant. 
Thus, we have $\NRD_n(\Gamma) \ge \lfloor n/k \rfloor\ge \Omega(n)$. 



\subsection{Streaming problems}
\paragraph{Streaming Satisfiability.} Fix a constraint language $\Gamma$. In the single-pass \emph{streaming satisfiability} problem, the input is a stream of constraints of an instance of $\CSP(\Gamma)$ on $n$ variables, and at the end of the stream, the algorithm must decide whether the instance is satisfiable. A deterministic streaming algorithm must succeed on every input, while a randomized algorithm is required to succeed with probability at least $2/3$ on every input.

\paragraph{Streaming Enumeration and Exact Sparsification.} We also consider the single-pass \emph{streaming enumeration} problem, in which the algorithm must output the full satisfying set $\Sat(I)$, or output $\emptyset$ if the instance is unsatisfiable. We also consider the single-pass \emph{streaming exact sparsification} problem, in which the algorithm must output an equivalent sub-instance $J \subseteq I$ of the streamed instance $I$, that is, a sub-instance satisfying $\Sat(J)=\Sat(I)$. For these problems, we consider a separate write-only output tape that does not count toward the space bound.

\paragraph{General notation.}
In the Boolean setting, the alphabet is $\{0,1\}=\{0,1\}$. For $n\ge 1$, we write $0^n$ and $1^n$ for the two constant assignments in $\{0,1\}^n$. An assignment $a\in\{0,1\}^n$ is \emph{non-constant} if $a\notin\{0^n,1^n\}$. All asymptotic notations are with respect to the number of variables $n$. In particular, $\widetilde{O}(\cdot)$, $\widetilde{\Omega}(\cdot)$, and $\widetilde{\Theta}(\cdot)$ suppress factors polylogarithmic in $n$, and the hidden constants may depend on the fixed alphabet size and constraint language.

\section{Main Results and Upper Bounds}\label{sec:results}

We now state our main results. The first theorem gives the complete characterization for general $\CSP$s, up to a logarithmic factor, the single-pass streaming space complexity of deciding satisfiability is exactly governed by the non-redundancy parameter.

\mainresult*


We next turn to positive Boolean $\CSP$s. In this setting, for non constant-satisfiable languages, the same bound, tight up to a logarithmic factor, continues to hold.

\begin{restatable}[Main positive Boolean \csp  result]{theorem}{mainbool}
\label{thm:main-bool-nolit}
For every non constant-satisfiable Boolean constraint language $\Gamma$ containing $k$-ary Boolean relations, to solve the satisfiability of positive \cspg there exists:
\begin{enumerate}[label=(\alph*)]
    \item\label{mainbool-a} a deterministic streaming algorithm in $O_{q,k}(\nrdg \log n)$ space; and
    \item\label{mainbool-b} no randomized algorithm in $o(\nrdg)$ space.
\end{enumerate}
\end{restatable}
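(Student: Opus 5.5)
Part (a) follows from the greedy algorithm of the general setting. A positive instance is a special case of a general one (all shifts zero), and the greedy algorithm only checks redundancy against the stored sub-instance. The algorithm keeps a set $J$ of stored constraints. When a constraint $C$ arrives, it tests whether $\Sat(J) \subseteq \Sat(\{C\})$ by brute force. This needs no extra memory beyond $J$ itself, since time is unbounded and the enumeration over $\{0,1\}^n$ uses $O(n)$ work space. If the test fails, $C$ is added to $J$. Invariant: $J$ stays non-redundant and $\Sat(J)=\Sat(I_{\le t})$. Adding a non-implied $C$ strictly shrinks $\Sat(J)$, and no older constraint becomes redundant. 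To see this, suppose $D$ became redundant in $J\cup\{C\}$. Its old witness $w$ must then violate $C$. But then $D$ could not be redundant with respect to $J \cup \{C\}$ unless some other witness exists… To avoid this subtlety I would instead argue directly. Order the stored constraints by insertion. Each newly added $C_t$ has an assignment satisfying $C_1,\dots,C_{t-1}$ but violating $C_t$. This makes $J$ a ``chain''-type instance. I would then use the standard fact from the non-redundancy literature that the chain length is bounded by $O_k(\nrdg)$, or, simpler, observe the instance used in Theorem~\ref{thm:main}(a) and invoke the same bound. Each stored constraint costs $O_{k}(\log n)$ bits, giving $O_{k}(\nrdg\log n)$ space. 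Since Theorem~\ref{thm:main}(a) is already available, I would simply cite it for positive instances (they are instances of general \cspg, and $\nrdg$ here refers to the same parameter or is at least as large). Note the subtlety: $\nrdg$ for positive instances may be smaller than for general ones. So I would rerun the greedy argument within positive instances, where the stored $J$ consists of positive constraints.

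Part (b) is the main work, via a reduction from \Index. Let $I_n$ be a maximum non-redundant positive instance, with a witness $a_C$ for each $C\in I_n$. Witnesses are distinct, since $a_C$ violates $C$ but $a_{C'}$ satisfies $C$. So at most two constraints have constant witnesses. Removing them gives $I_n'$ with $|I_n'|\ge \nrdg-2$ and all witnesses non-constant; $I_n'$ remains non-redundant with the same witnesses. Alice holds $S\subseteq I_n'$ (an $N$-bit string, $N=|I_n'|$). Bob holds $C$ and forms $I_C$, the union of all positive instances satisfied by $a_C$. By Theorem~\ref{thm:equiv-boolean-no-lit}, $\equivr$ restricted to non-constant assignments is an equivalence relation. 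So $\Sat(I_C)$ is contained in $\{b : (a_C,b)\in\equivr\}$. This set equals the class $[a_C]$ together with possibly constant assignments. Here the non-constant-satisfiability of $\Gamma$ is used: some single constraint rules out $0^n$, some rules out $1^n$. Both $0^n$ and $1^n$ are excluded unless $a_C$ itself satisfies those constraints. I will need to show that a non-constant $a_C$ can always be separated from constants by a positive instance. This is the step I expect to be hardest, and I would handle it using a relation $R$ with $0^k\notin R$ placed on suitable tuples of variables. For every $b\in[a_C]$, $b$ satisfies exactly the instances $a_C$ does. Hence $b$ satisfies $S$ iff $C\notin S$. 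So $S\cup I_C$ is satisfiable iff $C\notin S$.

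The streaming algorithm run on Alice's stream followed by Bob's stream then solves \Index. Therefore the space is $\Omega(N)=\Omega(\nrdg)$.
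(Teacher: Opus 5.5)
Your part (b) is essentially the paper's proof. You fix witnesses, discard the at most two constraints whose witness is constant, and let Bob append a suffix whose solutions form exactly the non-constant class $[a_C]_{\equivr}$, using \Cref{thm:equiv-boolean-no-lit}. Your suffix (all positive constraints satisfied by $a_C$) contains the paper's $P_{a,C}$. The step you flag as hardest is the paper's instance $H_a$, and it is short. Pick $p,p'$ with $(a_C)_p=0$ and $(a_C)_{p'}=1$, a relation $R^{(0)}\in\Gamma$ with $0^k\notin R^{(0)}$, and some $u\in R^{(0)}$. Apply $R^{(0)}$ to the scope that has $x_p$ where $u_t=0$ and $x_{p'}$ where $u_t=1$. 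Then $a_C$ reads $u\in R^{(0)}$ while $0^n$ reads $0^k\notin R^{(0)}$. The case of $1^n$ is symmetric.

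The gap is in part (a). Your algorithm only inserts (when $C$ is not implied by $J$) and never deletes, and the invariant you assert for it is false: a newly inserted, non-implied constraint can make an older one redundant. For example, storing $x_1\vee x_2$ and then receiving $x_1\vee x_1$ (i.e.\ $x_1=1$) leaves the first constraint redundant. So $J$ is only a chain, in which each constraint cuts down the solutions of its predecessors. It is not a non-redundant instance, and $|J|\le\nrdg$ does not follow. It can even fail: for $\Gamma=\{\{0\},\{1\}\}$, the stream $x_1=0,\dots,x_n=0,x_1=1$ makes you store $n+1$ constraints, while $\nrdg=n$. Your fallback, that chains have length $O_k(\nrdg)$, is a nontrivial claim you do not prove, and the whole space bound would rest on it.

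The paper avoids this entirely. After inserting $C$, it repeatedly deletes from $J\cup\{C\}$ any constraint that is redundant with respect to the current set. Each deletion preserves $\Sat$, so $\Sat(J)$ always equals the solution set of the stream seen so far. After each update, $J$ is a non-redundant sub-instance of the positive stream, so $|J|\le\nrdg$ holds by definition, for the positive non-redundancy, which as you note is the right parameter. The brute-force redundancy tests need only $O(n)$ extra space, which is within budget because $\nrdg=\Omega(n)$.
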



We first prove the upper bounds. Throughout the stream, our algorithm is to maintain an equivalent non-redundant sub-instance of the constraints seen so far. The same algorithm works in both settings.

\begin{proof}[Proof of \Cref{thm:main} \ref{main-a} and \Cref{thm:main-bool-nolit} \ref{mainbool-a}]
Let $F_t$ denote the set of first $t$ constraints seen in the stream. After processing the first $t$ constraints, the algorithm stores a non-redundant instance $K_t \subseteq F_t$ such that $\Sat(K_t)=\Sat(F_t)$. Initially $K_0=\emptyset$.

When a new constraint $C$ arrives, consider the instance $J:=K_t \cup \{C\}$. Repeatedly delete from $J$ any constraint that is redundant with respect to the current instance, and let the resulting non-redundant instance be $K_{t+1}$. Since we only delete redundant constraints, we have
\[
\Sat(K_{t+1})=\Sat(J)=\Sat(K_t)\cap \Sat(C)=\Sat(F_t)\cap \Sat(C)=\Sat(F_{t+1}).
\]

By construction, each $K_t$ is non-redundant, so $|K_t| \le \nrdg$. Since each relation in $\Gamma$ has arity $k$, each stored constraint uses $O_{q,k}(\log n)$ bits, and therefore storing $K_t$ takes $O_{q,k}(\nrdg\log n)$ bits.

It remains to show that each update can be performed within the same space bound. To test whether a constraint $C' \in J$ is redundant, it suffices to check whether there exists an assignment satisfying all constraints in $J \setminus \{C'\}$ while falsifying $C'$. 
Since we impose no time bound, we may brute-force over all assignments to the $n$ variables using only a counter and the current assignment, and this requires $O_{q}(n)$ bits of space. Since, $\nrdg\geq\Omega(n)$, every update can be carried out in $O_{q,k}(\nrdg)$ space.
At the end of the stream, the algorithm stores an equivalent non-redundant sub-instance $K_T$ with $\Sat(K_T)=\Sat(I)$. Therefore $I$ is satisfiable if and only if $K_T$ is satisfiable. 
We can decide whether $\Sat(K_T)\neq \emptyset$ by brute force within $O_{q,k}(\nrdg\log n)$ space. This proves the theorem.
\end{proof}


This proves the deterministic upper bound for any general \CSP, and thus extends trivially to positive Boolean \csps as well. The lower bounds are proved in \Cref{sec:lower-gen} for general \CSPs and \Cref{sec:lower-bool-nolit} for positive Boolean \CSPs.



\begin{remark}
The algorithm in \Cref{thm:main} solves a more general problem than 
satisfiability: it maintains an equivalent non-redundant sub-instance throughout the 
stream. Consequently, it yields a deterministic single-pass streaming algorithm for 
exact sparsification and streaming enumeration in $O_{q,k}(\nrdg \log n)$ space.
Conversely, both exact sparsification and enumeration are at least as hard as 
satisfiability, since an equivalent sub-instance preserves satisfiability exactly, and 
the full set of satisfying assignments reveals whether the instance is satisfiable. 
Thus the same $\Omega(\nrdg)$ lower bound applies to these tasks as well.
\end{remark}

\section{Lower Bounds}\label{sec:lower}

In this section we prove the streaming lower bounds stated in \Cref{thm:main} \ref{main-b} and \Cref{thm:main-bool-nolit} \ref{mainbool-b}. Both arguments proceed by reduction from the standard one-way communication problem $\Index_m$ in which Alice receives a string $x \in \{0,1\}^m$, Bob receives an index $i \in [m]$, and Bob must output $x_i$ after receiving a single message from Alice. We use the following standard fact.

\begin{lemma}\cite{Index1,Index2}\label{lem:index-lb}
Every public-coin one-way communication protocol for $\Index_m$ with error probability 
at most $1/3$ requires $\Omega(m)$ bits of communication.
\end{lemma}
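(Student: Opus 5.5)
The statement just given is the classical lower bound for \Index$_m$. I will not prove it from scratch; the plan is to cite the standard information-theoretic argument.

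By Yao's minimax principle, it suffices to show that under the uniform distribution on pairs $(x,i)\in\{0,1\}^m\times[m]$, every deterministic one-way protocol with distributional error at most $1/3$ sends $\Omega(m)$ bits. Fixing the public randomness to one with no larger average error turns a public-coin protocol into such a deterministic one.

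Let $M=M(x)$ be Alice's message, and let $X$ be uniform on $\{0,1\}^m$. The proof has three steps.

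\textbf{Step 1.} The message length bounds the information it carries:
\[
|M|\ \ge\ H(M)\ \ge\ I(X;M).
\]

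\textbf{Step 2.} Apply the chain rule, using that the coordinates $X_j$ are independent:
\[
I(X;M)=\sum_{j=1}^m I(X_j;M\mid X_{<j})\ \ge\ \sum_{j=1}^m I(X_j;M).
\]
The inequality holds because $I(X_j;M\mid X_{<j})=H(X_j)-H(X_j\mid M,X_{<j})\ge 1-H(X_j\mid M)$.

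\textbf{Step 3.} For each $j$, Bob's output is a function of $M$ and $j$. Let $\varepsilon_j$ be the error probability conditioned on $i=j$. Fano's inequality gives $H(X_j\mid M)\le h(\varepsilon_j)$, where $h$ is the binary entropy function. Averaging over $j$ and using the concavity of $h$,
\[
\sum_{j=1}^m I(X_j;M)\ \ge\ m\bigl(1-h(\tfrac13)\bigr)=\Omega(m),
\]
since $\frac1m\sum_j\varepsilon_j\le 1/3<1/2$.

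The only delicate point is the reduction from public-coin to distributional complexity. The averaging argument handles this: some fixed random string achieves error at most $1/3$ on the uniform distribution, and the message length does not change. This gives the $\Omega(m)$ bound claimed in the statement.
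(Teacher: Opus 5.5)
Your proof is correct. The averaging step uses only the easy direction of Yao's principle. The chain-rule bound $I(X;M)\ge\sum_j I(X_j;M)$ follows from independence of the coordinates, and Fano together with concavity and monotonicity of $h$ on $[0,1/2]$ then gives $|M|\ge m(1-h(1/3))$.

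The paper gives no proof of this lemma; it simply cites the classical \Index lower bound, so your information-theoretic derivation just supplies the standard argument behind that citation. The only nit is that $H(M)\le|M|$ presumes fixed-length messages. For messages of at most $c$ bits you get $H(M)\le c+1$, which does not affect the $\Omega(m)$ bound.
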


\subsection{General \texorpdfstring{\csps}{CSPs}}\label{sec:lower-gen}

Our lower bound for $\CSP(\Gamma)$ (\Cref{thm:main} \ref{main-b}) is based on a structural relation on assignments. For $a,b \in [q]^n$, we write $(a,b)\in \equivr$ if every instance of $\CSP(\Gamma)$ that is satisfied by $a$ is also satisfied by $b$. This relation is immediately reflexive and transitive, but symmetry is far from obvious. The key point of the next theorem is that, for every constraint language $\Gamma$, this relation is in fact an equivalence relation. 
Moreover, for every equivalence class, we show that there exists an instance whose set of satisfying assignments is exactly that class.




\begin{theorem}[Equivalence Class Theorem for $\CSP(\Gamma)$]\label{thm:equiv-lit}
Let $n \in \mathbb{N}$ and $\Gamma$ be a constraint language containing $k$-ary relations over the domain $[q]$. For $a,b \in [q]^n$, define $(a,b)\in \equivr$ if every instance of $\CSP(\Gamma)$ on $n$ variables that is satisfied by $a$ is also satisfied by $b$. Then $\equivr$ is an equivalence relation on $[q]^n$.

As a consequence, for every $a \in [q]^n$, there exists an instance $P_a$ of $\CSP(\Gamma)$ such that $\Sat(P_a) = [a]_{\equivr}$, where $[a]_{\equivr}$ denotes the equivalence class of $a$ under $\equivr$.
\end{theorem}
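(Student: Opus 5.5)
The plan is to show that $\equivr$ is invariant under simultaneous translation of both assignments, and then to get symmetry from the fact that translation by any fixed vector has finite order in $[q]^n$. The literal shifts $\lambda$ in the constraint model are exactly what makes translation invariance hold. This is also why the argument should break down for positive \CSPs, consistent with \Cref{sec:counter-gen-nolit}.

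\textbf{Step 1 (translation invariance).} For $c \in [q]^n$ and an instance $I$, let $I^{(c)}$ be the instance obtained by replacing each constraint $\{R,(x_{i_1}+\lambda_{i_1},\dots,x_{i_k}+\lambda_{i_k})\}$ with $\{R,(x_{i_1}+\lambda_{i_1}+c_{i_1},\dots,x_{i_k}+\lambda_{i_k}+c_{i_k})\}$. This is again an instance of $\CSP(\Gamma)$, and $x$ satisfies $I^{(c)}$ if and only if $x+c$ satisfies $I$, so $\Sat(I^{(c)}) = \Sat(I)-c$.

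Now suppose $(a,b)\in\equivr$ and let $I$ be satisfied by $a+c$. Then $a$ satisfies $I^{(c)}$, hence $b$ satisfies $I^{(c)}$, hence $b+c$ satisfies $I$. So $(a+c,b+c)\in\equivr$ for every $c$.

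\textbf{Step 2 (symmetry).} Suppose $(a,b)\in\equivr$ and set $d=b-a$.
\begin{enumerate}
  \item Applying Step 1 with $c=jd$ for $j=1,\dots,q-1$ gives $(a+jd,\,a+(j+1)d)\in\equivr$.
  \item Chaining these by transitivity gives $(a+d,\,a+qd)\in\equivr$.
  \item Since arithmetic is modulo $q$, we have $a+qd=a$, so this says $(b,a)\in\equivr$.
\end{enumerate}
Reflexivity and transitivity are immediate, so $\equivr$ is an equivalence relation.

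\textbf{Step 3 (the instance $P_a$).} Let $P_a$ be the set of all constraints over the $n$ variables that $a$ satisfies. There are finitely many such constraints, since relations, scopes and shifts are all finitely many. Clearly $a\in\Sat(P_a)$.

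We show $\Sat(P_a)=[a]_{\equivr}$ in both directions.
\begin{itemize}
  \item If $b\in[a]_{\equivr}$, then $b$ satisfies every instance that $a$ satisfies, in particular $P_a$.
  \item Conversely, let $b\in\Sat(P_a)$ and let $I$ be any instance satisfied by $a$. Then $I\subseteq P_a$, so $b$ satisfies $I$. Hence $(a,b)\in\equivr$, and by symmetry $b\in[a]_{\equivr}$.
\end{itemize}

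The only nontrivial step is Step 1. There I need to check that coordinatewise shifts of the $\lambda$'s stay within the allowed instance class. This holds because any $\lambda\in[q]^k$ is permitted, and the same variable appearing twice in a scope with different shifts causes no issue.
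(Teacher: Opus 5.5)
Your proof is correct, but it reaches symmetry by a different route than the paper. The paper uses the same shift operator $\tau_\Delta$ (your $I^{(c)}$), with $\Delta=a-b$. It observes only that $\tau_\Delta$ maps the family $\mathcal{S}_a$ of instances satisfied by $a$ bijectively onto the family $\mathcal{S}_b$ of instances satisfied by $b$. Since there are finitely many instances on $n$ variables, $\mathcal{S}_a\subseteq\mathcal{S}_b$ together with $|\mathcal{S}_a|=|\mathcal{S}_b|$ forces $\mathcal{S}_a=\mathcal{S}_b$. You instead prove that $\equivr$ itself is translation-invariant, then walk around the cyclic orbit $a, a+d, \dots, a+qd=a$ using transitivity. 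So the finiteness you exploit is the finite order of $d$ in $\mathbb{Z}_q^n$, not the finiteness of the set of instances; conversely, the paper's argument never uses transitivity. A by-product of your route is that it makes explicit that the classes of $\equivr$ are the cosets of the subgroup $[0^n]_{\equivr}$ of $\mathbb{Z}_q^n$. For the second claim, the paper builds $P_a$ as a union of separating instances $J_u$, one for each $u\notin[a]_{\equivr}$. Your $P_a$, the set of all constraints satisfied by $a$, is the canonical maximal such instance. It avoids those choices and directly gives $\Sat(P_a)=\{b:(a,b)\in\equivr\}$, after which symmetry identifies this set with $[a]_{\equivr}$ just as in the paper.
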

\begin{proof}
Reflexivity and transitivity are immediate, so it remains to prove symmetry. Fix $a,b \in [q]^n$ with $(a,b)\in \equivr$. For any $\Delta \in [q]^n$, define the shift operator $\tau_\Delta$ on constraints by
\[
\tau_\Delta\bigl(\{R,(x_{i_1}+\lambda_{i_1},\dots,x_{i_k}+\lambda_{i_k})\}\bigr)
:=
\{R,(x_{i_1}+\lambda_{i_1}+\Delta_{i_1},\dots,x_{i_k}+\lambda_{i_k}+\Delta_{i_k})\},
\]
with all additions modulo $q$, and extend $\tau_\Delta$ to instances by applying it each individual constraint in the instance.
Then for every assignment $z$ and every instance $J$, we have that $z$ satisfies $\tau_\Delta(J)$ if and only if $z+\Delta$ satisfies $J$.

Let $\mathcal{S}_a$ denote the set of all instances of $\CSP(\Gamma)$ on $n$ variables that are satisfied by $a$, and define $\mathcal{S}_b$ analogously. Now let $\Delta:=a-b$. Then for every instance $J$, we have that  $a \sats J$ if an only if $\quad b \sats \tau_\Delta(J)$, since $b+\Delta=b+(a-b)=a$. Thus $\tau_\Delta$ maps $\mathcal{S}_a$ bijectively onto $\mathcal{S}_b$, with inverse $\tau_{-\Delta}$, and therefore $|\mathcal{S}_a|=|\mathcal{S}_b|$.

On the other hand, since $(a,b)\in \equivr$, every instance satisfied by $a$ is also satisfied by $b$, so $\mathcal{S}_a \subseteq \mathcal{S}_b$. Because $\Gamma$ contains $k$-ary relation, there are only finitely many distinct constraints on $n$ variables, and hence only finitely many instances. Therefore $\mathcal{S}_a$ and $\mathcal{S}_b$ are finite sets. Since they are finite, have the same cardinality, and satisfy $\mathcal{S}_a \subseteq \mathcal{S}_b$, it follows that $\mathcal{S}_a=\mathcal{S}_b$. Hence every instance satisfied by $b$ is also satisfied by $a$, i.e., $(b,a)\in \equivr$. Thus $\equivr$ is symmetric, and therefore an equivalence relation.

For the second claim, fix $a \in [q]^n$. For every $u \notin [a]_{\equivr}$, we have $(a,u)\notin \equivr$, so there exists an instance $J_u$ such that $a$ satisfies $J_u$ but $u$ does not. Let $P_a:=\bigcup_{u\notin [a]_{\equivr}} J_u$. Since the domain is finite, there are only finitely many such assignments $u$. If $c \in [a]_{\equivr}$, then $(a,c)\in \equivr$, and since $a$ satisfies every $J_u$, it follows that $c$ satisfies every $J_u$, and hence $c$ satisfies $P_a$. Conversely, if $u \notin [a]_{\equivr}$, then $u$ does not satisfy $J_u$, and therefore does not satisfy $P_a$. Thus $\Sat(P_a)=[a]_{\equivr}$.
\end{proof}

We now specify our reduction from $\Index$ for the main streaming lower bound. Intuitively, to prove our lower bound, starting from a maximum non-redundant instance, we take for each constraint $C_i$ a witness assignment $a_i$ that satisfies all other constraints but falsifies $C_i$. We use the above theorem to get the suffix $P_{a_i}$ whose set of satisfying assignments is exactly the assignments that are equivalent to $a_i$ under $\equivr$.
This is precisely what is needed for the reduction below.


\begin{proof}[Proof of \Cref{thm:main} \ref{main-b}]
Let $m := \nrdg$, and let $I = \{C_1,\dots,C_m\}$ be a non-redundant instance of $\CSP(\Gamma)$ on $n$ variables with exactly $m$ constraints. For each $i \in [m]$, fix a witness assignment $a_i \in [q]^n$ such that $a_i$ satisfies every constraint in $I \setminus \{C_i\}$ but does not satisfy $C_i$.
We now give a one-way communication protocol for $\Index_m$. Recall that Alice is given $x \in \{0,1\}^m$, Bob is given $i \in [m]$, and the goal is to recover $x_i$. Let $S_x := \{j \in [m] : x_j = 1\}$. Alice streams the sub-instance $I_{S_x} := \{C_j : j \in S_x\}$ through the streaming algorithm and sends Bob the resulting memory state.
Bob now appends the suffix $P_{a_i}$ from the \Cref{thm:equiv-lit}. If $x_i = 0$, then $i \notin S_x$, so every constraint of $I_{S_x}$ belongs to $I \setminus \{C_i\}$ and is therefore satisfied by $a_i$. Also, $a_i \in [a_i]_\equivr = \Sat(P_{a_i})$. Hence $I_{S_x} \cup P_{a_i}$ is satisfiable.
If $x_i = 1$, then $C_i \in I_{S_x}$. Moreover, every assignment in $\Sat(P_{a_i}) = [a_i]_\equivr$ satisfies exactly the same instances (and hence, constraints) as $a_i$. Since $a_i \unsats C_i$, it follows that no assignment in $[a_i]_\equivr$ satisfies $C_i$. Hence no assignment can satisfy $I_{S_x} \cup P_{a_i}$, so this instance is unsatisfiable.

Thus Bob can recover $x_i$ from the output of the streaming algorithm. Therefore any randomized single-pass streaming algorithm for satisfiability yields a public-coin one-way protocol for $\Index_m$ with the same message length, and thus by \Cref{lem:index-lb}, requires $\Omega(m)=\Omega(\NRD_n(\Gamma))$ space.
\end{proof}

\subsection{Positive Boolean \texorpdfstring{\csps}{CSPs}}\label{sec:lower-bool-nolit}

We now turn to the proof of \Cref{thm:main-bool-nolit} \ref{mainbool-b} that states the lower bound for positive Boolean \CSPs. Here the same relation $\equivr$ is still the right structural object, but the argument is more delicate: in the positive setting, symmetry need not hold automatically. The next theorem shows that, under the natural assumptions that $\Gamma$ is non constant-satisfiable, the relation $\equivr$ remains an equivalence relation once we restrict to non-constant assignments.


\begin{theorem}[Non-constant Equivalence Theorem for positive Boolean $\CSP$s]\label{thm:equiv-boolean-no-lit}
Let $n \in \mathbb{N}$ and $\Gamma$ be a non constant-satisfiable constraint language containing $k$-ary Boolean relations. For $a,b \in \{0,1\}^n$, define $(a,b)\in \equivr$ if every instance of the positive $\CSP(\Gamma)$ on $n$ variables that is satisfied by $a$ is also satisfied by $b$. Then the restriction of $\equivr$ to the non-constant assignments $\{0,1\}^n \setminus \{0^n,1^n\}$ is an equivalence relation.
\end{theorem}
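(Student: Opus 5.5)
The plan is to reduce $\equivr$ to a condition on single constraints, and then show that for non-constant $a,b$ with $(a,b)\in\equivr$, the assignment $b$ must be either $a$ or its complement $\overline{a}$. Both cases then give $(b,a)\in\equivr$.

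First, since an instance is satisfied by an assignment exactly when each of its constraints is, $(a,b)\in\equivr$ holds if and only if every single constraint $\{R,(x_{i_1},\dots,x_{i_k})\}$ satisfied by $a$ is also satisfied by $b$. Equivalently, for every $R\in\Gamma$ and every scope $(i_1,\dots,i_k)\in[n]^k$ (repetitions allowed), $(a_{i_1},\dots,a_{i_k})\in R$ implies $(b_{i_1},\dots,b_{i_k})\in R$. Because $a$ is non-constant, fix indices $p,r$ with $a_p=0$ and $a_r=1$. Taking scopes with entries in $\{p,r\}$, every word $w\in\{0,1\}^k$ arises as an $a$-pattern.

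The key step is to show that $b$ is a coordinatewise function of $a$, i.e.\ $b_i=f(a_i)$ for some $f:\{0,1\}\to\{0,1\}$. Suppose not: then there are $i,j$ with $a_i=a_j=\alpha$ but $b_i=0$ and $b_j=1$. Let $l$ be an index with $a_l=1-\alpha$, which exists since $a$ is non-constant. Take any $R\in\Gamma$ and any $w\in R$ (non-empty by non-triviality). Build a scope that uses $i$ or $j$ at each position where $w$ equals $\alpha$, and uses $l$ at the other positions. Under $a$ this scope has pattern $w\in R$. Under $b$ we may choose $i$ or $j$ freely at the $\alpha$-positions, so setting all of them to have value $b_l$ gives the pattern $b_l^k$. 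Hence $b_l^k\in R$ for every $R\in\Gamma$, so $\Gamma$ is constant-satisfiable, a contradiction.

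Thus $f$ is the identity, the complement, or a constant map. Constant $f$ is excluded because $b$ is non-constant. If $f$ is the identity, then $b=a$ and symmetry is trivial. If $b=\overline{a}$, the single-constraint criterion together with the fact that every word is an $a$-pattern shows that $w\in R$ implies $\overline{w}\in R$ for all $R\in\Gamma$. So every relation is closed under complement, and the same criterion applied with the roles of $a$ and $\overline{a}$ swapped gives $(\overline{a},a)\in\equivr$. Reflexivity and transitivity on the non-constant assignments are immediate, as in \Cref{thm:equiv-lit}.

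I expect the main obstacle to be the function step, specifically choosing scopes with repeated indices so that the non-constant-satisfiability hypothesis is exactly what gives the contradiction. The remaining steps are bookkeeping.
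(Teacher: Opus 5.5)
Your proof is correct, but it is organized differently from the paper's.

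The paper works with the set of pairs $P(a,b)=\{(a_j,b_j): j\in[n]\}$ and splits on which off-diagonal pairs occur. If only $(0,1)$ occurs (or only $(1,0)$), it uses the diagonal pairs to flip zeros to ones one coordinate at a time, forcing $1^k$ (or $0^k$) into every relation. If both occur, it observes that $P(a,b)=P(b,a)$. Then the defining condition for $(a,b)\in\equivr$ is literally the condition for $(b,a)\in\equivr$, so symmetry follows without ever determining what $b$ is.

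You instead pin down $b$ first. Your collision argument is that two pairs with different first coordinates and a common second coordinate $c$ force $c^k$ into every relation. It shows that $b$ is a coordinatewise function of $a$, hence $b\in\{a,\overline{a}\}$. This is exactly the content of the Remark the paper proves separately after the theorem. Symmetry then comes from the fact that $(a,\overline{a})\in\equivr$ forces every relation in $\Gamma$ to be closed under complementation.

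What each route buys:
\begin{itemize}
\item Your route gives the full description of the classes in one pass. It also makes it immediate that the classes are uniformly either all singletons or all pairs $\{a,\overline{a}\}$, according to whether every $R\in\Gamma$ is complement-closed.
\item Your one-shot scope construction is also a bit more direct than the paper's iterative flipping.
\item The paper's observation $P(a,b)=P(b,a)$ is the shorter path if one only wants symmetry.
\end{itemize}
Both arguments use scopes with repeated variables and rely on every relation being non-empty; you invoke the latter explicitly, as the paper's standing non-triviality assumption allows.
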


\begin{proof}
Reflexivity and transitivity are immediate, so it suffices to prove symmetry. Fix distinct assignments $a,b \in \{0,1\}^n\backslash \{0^n,1^n\}$ with $(a,b)\in \equivr$. Define

\[E_0:=\{j \in [n]: a_j=b_j=0\}, \qquad E_1:=\{j \in [n]: a_j=b_j=1\}\]
\[ D_{01}:=\{j \in [n]: a_j=0,\ b_j=1\}, \qquad \ D_{10}:=\{j \in [n]: a_j=1,\ b_j=0\}.\]

Also define $P(a,b):=\{(a_j,b_j): j \in [n]\} \subseteq \{0,1\}^2$. By definition, $(a,b)\in \equivr$ iff for every relation $R \in \Gamma$ of arity $k$ the following holds:
\[\text{For every map }\psi:[k]\to P(a,b),
\text{ if } (\psi_1(1),\dots,\psi_1(k)) \in R, \text{  then } (\psi_2(1),\dots,\psi_2(k)) \in R,\] where $\psi_t(\ell)$ denotes the $t$-th coordinate of the pair $\psi(\ell)$. In words, whenever a tuple in $R$ is obtained by reading the first coordinates of some $k$ pairs from $P(a,b)$, the tuple obtained by reading the corresponding second coordinates must also lie in $R$.

Consider the following three cases:

\begin{enumerate}
    \item Suppose first that $D_{01}\neq \emptyset$ and $D_{10}=\emptyset$. Then $E_1\neq \emptyset$ because $a$ is non-constant, and $E_0\neq \emptyset$ because $b$ is non-constant and $D_{10}=\emptyset$. Fix any relation $R \in \Gamma$ and any tuple $u \in R$. We claim that $1^k \in R$. If $u \neq 1^k$, choose any coordinate $r$ with $u_r=0$, and define $\psi:[k]\to P(a,b)$ by setting $\psi(r):=(0,1)$, setting $\psi(\ell):=(1,1)$ for $\ell \neq r$ with $u_\ell=1$, and setting $\psi(\ell):=(0,0)$ for $\ell \neq r$ with $u_\ell=0$. Then $(\psi_1(1),\dots,\psi_1(k))=u \in R$, so $(\psi_2(1),\dots,\psi_2(k))=(u_1,\dots,u_{r-1},1,u_{r+1},\dots,u_k) \in R$. Repeating this argument coordinate by coordinate gives $1^k \in R$. Since $R$ was arbitrary, every relation in $\Gamma$ contains the all-$1$ tuple, contradicting that $\Gamma$ is non constant-satisfiable.
    \item  Suppose next that $D_{01}=\emptyset$ and $D_{10}\neq \emptyset$. An analogous argument shows that every relation in $\Gamma$ contains the all-$0$ tuple, again a contradiction.
    \item Therefore the only remaining possibility is that $D_{01}\neq \emptyset$ and $D_{10}\neq \emptyset$. In this case $P(a,b)=P(b,a)$, since both off-diagonal pairs $(0,1)$ and $(1,0)$ are present and the diagonal pairs are fixed under swapping. Hence the defining implication for $(a,b)\in \equivr$ is exactly the defining implication for $(b,a)\in \equivr$, and therefore $(b,a)\in \equivr$.
\end{enumerate}

Thus $\equivr$ is symmetric on $\{0,1\}^n \setminus \{0^n,1^n\}$, and its restriction to the non-constant assignments is an equivalence relation.
\end{proof}

Although we do not need it for the lower bound, the same argument also gives a simple description of the $\equivr$-equivalence classes on non-constant assignments.

\begin{remark}
For every non-constant assignment $a$, the $\equivr$-equivalence class of $a$ inside $\{0,1\}^n \setminus \{0^n,1^n\}$  is either $[a]_\equivr = \{a\}$ or $[a]_\equivr = \{a,\overline{a}\}$. 
\end{remark}

\begin{proof}
Fix distinct non-constant assignments $a,b$ with $(a,b)\in \equivr$. By the previous argument, both $D_{01}$ and $D_{10}$ must be non-empty. We claim that $E_0=E_1=\emptyset$. Indeed, if $E_0\neq\emptyset$, then using a coordinate from $E_0$ and one from $D_{10}$ using the same mapping argument as before, we conclude that every non-empty relation in $\Gamma$ must contain $0^k$, contradicting that $\Gamma$ is non constant-satisfiable. Similarly, if $E_1\neq\emptyset$, then using a coordinate from $E_1$ and one from $D_{01}$ forces every non-empty relation to contain $1^k$, again a contradiction. Hence every coordinate lies in either $D_{01}$ or $D_{10}$, so $b_j=1-a_j$ for all $j\in[n]$. Therefore $b=\overline{a}$.
\end{proof}

Unlike the general case, this restricted equivalence theorem is not by itself enough for the lower bound, because both the constant assignments must still be excluded explicitly. The next lemma strengthens the theorem in the form needed for the reduction. Given a non-constant witness assignment $a$ falsifying a constraint $C$, it constructs a suffix $P_{a,C}$ that is satisfied by $a$ and forces every satisfying assignment to falsify $C$.

\begin{lemma}\label{lem:constraint-suffix-boolean}
Let $C$ be a constraint in a positive $\CSP(\Gamma)$ instance, and let $a \in \{0,1\}^n \setminus \{0^n,1^n\}$ be such that $a \unsats C$. Then there exists an instance $P_{a,C}$ of positive $\CSP(\Gamma)$ such that $a \sats P_{a,C}$ and every assignment satisfying $P_{a,C}$ falsifies $C$.
\end{lemma}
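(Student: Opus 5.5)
Build the suffix by the same union-of-separating-instances idea used for $P_a$ in \Cref{thm:equiv-lit}, but separate $a$ only from the assignments that satisfy $C$, not from all of $[a]_\equivr$. Let $U_C:=\{u\in\{0,1\}^n : u\sats C\}$. For each $u\in U_C$ I want a positive instance $J_u$ with $a\sats J_u$ and $u\unsats J_u$. Then I set
\[
P_{a,C}:=\bigcup_{u\in U_C} J_u ,
\]
which is a finite union since $U_C$ is finite. Every $J_u$ is satisfied by $a$, so $a\sats P_{a,C}$. Any $u$ satisfying $C$ fails $J_u$, so it fails $P_{a,C}$; hence every satisfying assignment of $P_{a,C}$ falsifies $C$. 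So everything reduces to showing that $J_u$ exists for each $u\in U_C$, i.e.\ that $(a,u)\notin\equivr$.

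Suppose, for contradiction, that $(a,u)\in\equivr$ for some $u$ with $u\sats C$. I split into cases on $u$.

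\textbf{Case $u$ non-constant.} \Cref{thm:equiv-boolean-no-lit} gives $(u,a)\in\equivr$. The single-constraint instance $\{C\}$ is positive and satisfied by $u$, so it is satisfied by $a$. This contradicts $a\unsats C$.

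\textbf{Case $u\in\{0^n,1^n\}$, say $u=1^n$.} This case is not covered by \Cref{thm:equiv-boolean-no-lit}, and I expect it to be the main obstacle. I will argue directly from the pair-set characterisation in the proof of \Cref{thm:equiv-boolean-no-lit}. Since $a$ is non-constant, $P(a,1^n)=\{(0,1),(1,1)\}$. The defining implication for $(a,1^n)\in\equivr$ says: for every $R\in\Gamma$ and every $v\in R$, the tuple obtained by reading second coordinates must lie in $R$. Realize $v$ by sending each coordinate with $v_\ell=0$ to $(0,1)$ and each coordinate with $v_\ell=1$ to $(1,1)$; every second coordinate is then $1$, so $1^k\in R$. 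Since $R$ was arbitrary, every $R\in\Gamma$ contains $1^k$. Each relation is non-empty because $\Gamma$ is non-trivial, so this applies to all of $\Gamma$, and $\Gamma$ is constant-satisfiable, a contradiction. The case $u=0^n$ is symmetric, using $P(a,0^n)=\{(0,0),(1,0)\}$ to force $0^k\in R$ for every $R\in\Gamma$.

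In both cases $(a,u)\notin\equivr$, so $J_u$ exists. This also shows that the only place non-constant-satisfiability enters beyond \Cref{thm:equiv-boolean-no-lit} is in ruling out that $a\equivr$-implies a constant assignment.
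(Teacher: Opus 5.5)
Your proof is correct and takes essentially the paper's route: $P_{a,C}$ is a union of separating instances, and \Cref{thm:equiv-boolean-no-lit} rules out non-constant $u \sats C$ exactly as in the paper's final step. The one difference is how the constant assignments are excluded. The paper adds an explicit two-constraint gadget $H_a$ on $x_p,x_q$ (with $a_p=0$, $a_q=1$), built from relations $R^{(0)},R^{(1)}$ with $0^k\notin R^{(0)}$ and $1^k\notin R^{(1)}$; you instead derive $(a,0^n),(a,1^n)\notin\equivr$ by contradiction from the pair-set characterization, so $H_a$ is just a concrete witness of these two facts, and both arguments rely on non-triviality of $\Gamma$ in the same way.
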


\begin{proof}
 Since $\Gamma$ is non constant-satisfiable, there exist (not necessarily distinct) relations 
 $R^{(0)},R^{(1)} \in \Gamma$ such that $0^{k} \notin R^{(0)}$ 
 and $1^{k} \notin R^{(1)}$. Fix tuples $u \in R^{(0)}$ and $v \in R^{(1)}$. Since $a$ is non-constant, choose coordinates $p,q \in [n]$ 
 with $a_p=0$ and $a_q=1$. Let $H_a$ be the instance of positive \cspg on the two variables $x_p,x_q$ consisting of the following two constraints:
\begin{enumerate}
    \item the constraint $\{R^{(0)},(x'_1,\ldots,x'_k)\}$, where $x'_t=x_p$ if $u_t=0$ and $x'_t=x_q$ if $u_t=1$;
    \item the constraint $\{R^{(1)},(x''_1,\ldots,x''_k)\}$, where $x''_t=x_p$ if $v_t=0$ and $x''_t=x_q$ if $v_t=1$.
\end{enumerate}
Then $a \sats H_a$, while neither $0^n$ nor $1^n$ satisfies $H_a$.

Let $[a]_{\equivr}$ denote the $\equivr$-equivalence class of $a$ within the set $\{0,1\}^n\setminus\{0^n, 1^n\}$. For every non-constant assignment $b \notin [a]_{\equivr}$ such that $b \sats C$, Theorem~\ref{thm:equiv-boolean-no-lit} implies $(a,b)\notin \equivr$, so there exists an instance $J_b$ such that $a \sats J_b$ but $b \unsats J_b$. Define 
\[P_{a,C}:=H_a \cup \bigcup_{\substack{b \in \{0,1\}^n \setminus \{0^n,1^n\} \\ b \notin [a]_{\equivr},\ b \sats C}} J_b.\] Then $a \sats P_{a,C}$.

Now let $c$ be any assignment satisfying $P_{a,C}$. Suppose for contradiction that $c \sats C$. Since $c \sats H_a$, the assignment $c$ is non-constant. If $c \notin [a]_{\equivr}$, then $J_c$ appears in the union defining $P_{a,C}$, and $c \unsats J_c$, which is a contradiction. Hence $c \in [a]_{\equivr}$, so $(c,a)\in \equivr$. Since $c \sats C$, the one-constraint instance $\{C\}$ is satisfied by $c$, and therefore also by $a$, contradicting that $a \unsats C$. Thus every assignment satisfying $P_{a,C}$ \unsats $C$.
\end{proof}


We now complete the reduction. The point of \Cref{lem:constraint-suffix-boolean} is that whenever a constraint $C_i$ has a non-constant witness $a_i$, it yields a suffix $P_i$ that is satisfied by $a_i$ and forces every satisfying assignment to falsify $C_i$. Thus the same reduction as in the general case goes through, after separately handling the exceptional constraints that admit only constant witnesses.

\begin{proof}[Proof of \Cref{thm:main-bool-nolit} \ref{mainbool-b}]

Let $m:=\nrdg$, and let $I=\{C_1,\dots,C_m\}$ be a non-redundant instance of $\CSP(\Gamma)$ on $n$ variables with exactly $m$ constraints. Call a constraint $C_i$ bad if every witness for $C_i$ is a constant assignment (i.e., $0^n$ or $1^n$), and good otherwise. We claim that there are at most two bad constraints. Say three constraints were bad, then two of them would share the same constant witness assignment, which is impossible by definition. Let $G \subseteq [m]$ denote the set of good constraints. Then $|G| \ge m-2$. Using the fact that $m=\nrdg\geq \Omega(n)$, we get $|G|=\Omega(m)=\Omega(\nrdg)$.

Fix, for each $i \in G$, a non-constant witness $a_i \in \{0,1\}^n$ such that $a_i$ satisfies every constraint in $I \setminus \{C_i\}$ and falsifies $C_i$. By Lemma~\ref{lem:constraint-suffix-boolean}, for each $i \in G$ there exists an instance $P_i:=P_{a_i,C_i}$ such that $a_i \sats P_i$ and every assignment satisfying $P_i$ falsifies $C_i$.

We now give a one-way communication protocol for $\Index_{|G|}$. Enumerate $G=\{g_1,\dots,g_t\}$, where $t:=|G|$. Alice is given $x \in \{0,1\}^t$, Bob is given $r \in [t]$, and the goal is to recover $x_r$. Let $S_x:=\{g_j : x_j=1\} \subseteq G$. Recall that Alice streams the sub-instance $I_{S_x}:=\{C_j : j \in S_x\}$ through the streaming algorithm and sends Bob the resulting memory state. Bob sets $i:=g_r$ and appends the suffix $P_i$.

If $x_r=0$, then $i \notin S_x$, so every constraint of $I_{S_x}$ belongs to $I \setminus \{C_i\}$ and is therefore satisfied by $a_i$. Since $a_i \sats P_i$ as well, the instance $I_{S_x} \cup P_i$ is satisfiable.

If $x_r=1$, then $i \in S_x$, so $C_i \in I_{S_x}$. But every assignment satisfying $P_i$ falsifies $C_i$, by Lemma~\ref{lem:constraint-suffix-boolean}. Hence $I_{S_x} \cup P_i$ is unsatisfiable.

Thus Bob can recover $x_r$ from the output of the streaming algorithm. Therefore any randomized single-pass streaming algorithm for satisfiability yields a public-coin one-way communication protocol for $\Index_t$ with the same message length, and therefore by \Cref{lem:index-lb}, requires at least $\Omega(t)=\Omega(|G|)=\Omega(\nrdg)$ space.
\end{proof}

\subsection{Counterexamples over larger domains}\label{sec:counter-gen-nolit}

We now show that the lower bound for positive Boolean \csps does not extend to positive \csps over larger domains. This is not merely a failure of our proof technique via the equivalence relation $\equivr$; the lower bound itself is false in general. We give two counterexamples over non-Boolean domains. In each case, single-pass streaming satisfiability is strictly easier than what $\nrdg$ would suggest, and the relation $\equivr$ also fails to be symmetric.

\subsubsection{Non-Singleton constraint language}\label{sec:counter-anyarity}

Let the domain be $[3]=\{0,1,2\}$, and let the language be:
\[\Gamma := \{R,U_1,U_2\}, \quad \text{ such that }\quad  R := [3]^2 \setminus \{(0,0)\}, \quad U_1 := \{1\}, \quad U_2 := \{2\} \]

Clearly $\Gamma$ is non constant-satisfiable (the all-$0$ assignment violates both $U_1$ and $U_2$, the all-$1$ assignment violates $U_2$, and the all-$2$ assignment violates $U_1$).

\paragraph{Linear Streaming Satisfiability.} 
We first observe that streaming satisfiability for positive $\CSP(\Gamma)$ is easy. Observe that once a variable is assigned a nonzero value, every binary constraint of type $R$ involving that variable is automatically satisfied, since every pair in $\{1,2\}^2$ belongs to $R$. Therefore an instance is satisfiable if and only if no variable receives both unary constraints $\{U_1,(x_i)\}$ and $\{U_2,(x_i)\}$. This gives a single-pass $O(n)$-space algorithm: for each variable, store whether it has received a $U_1$-constraint and whether it has received a $U_2$-constraint, and reject if both occur. If no conflict occurs, then assigning each variable a value in $\{1,2\}$ consistent with its unary constraints satisfies every constraint in the instance.

\paragraph{Quadratic \NRD.} 
On the other hand, $\nrdg$ is quadratic. Consider the instance
$I := \{\{R,(x_i,x_j)\} : 1 \le i < j \le n\}$.
We claim that $I$ is non-redundant. Fix any constraint $\{R,(x_i,x_j)\}$ in $I$. Set $x_i=x_j=0$ and set every other variable to $1$. This assignment falsifies the chosen constraint, since $(0,0)\notin R$, but satisfies every other constraint in $I$, because every other pair has at least one endpoint equal to $1$. Thus every constraint in $I$ has a witness assignment, and so $I$ is non-redundant. It follows that $\nrdg\ge \binom{n}{2}=\Omega(n^2)$. Since each relation in $\Gamma$ has arity at most $2$, there are only $O(n^2)$ distinct binary constraints and $O(n)$ unary constraints on $n$ variables, so trivially $\nrdg=O(n^2)$. Hence $\nrdg=\Theta(n^2)$.

\paragraph{No Equivalence Relation.} 
We next observe that the relation $\equivr$ is not symmetric for this language. Let $a=(0,1)\in [3]^2$ and $b=(2,1)\in [3]^2$. Every atomic constraint satisfied by $a$ is also satisfied by $b$. In particular, $\{U_1,(x_2)\}$ is satisfied by both assignments, and every binary $R$-constraint satisfied by $a$ remains satisfied after changing the first coordinate from $0$ to $2$, since the only forbidden tuple of $R$ is $(0,0)$. It follows that $(a,b)\in \equivr$. However, $b$ satisfies the unary constraint $\{U_2,(x_1)\}$, while $a$ does not. Therefore $(b,a)\notin \equivr$, and hence $\equivr$ is not symmetric.

We conclude that for the positive \cspg defined by the non constant-satisfiable language $\Gamma$ as above, single-pass streaming satisfiability has space complexity $O(n)$ while $\nrdg=\Theta(n^2)$. In particular, the lower bound by non-redundancy fails in general over larger domains, and the equivalence-theorem mechanism also breaks down.

The same construction extends immediately to every fixed arity $r\ge 2$. Indeed, if one replaces $R$ by $R_r := [3]^r \setminus \{0^r\}$ and keeps the same unary relations $U_1$ and $U_2$, then the same proof gives a positive \csp with single-pass streaming satisfiability complexity $O(n)$ and non-redundancy $\Theta(n^r)$.

\subsubsection{Singleton constraint language}\label{sec:counter-singleton}

We now show that the same phenomenon occurs even for a language consisting of a single binary relation. Let the domain be $[4]=\{0,1,2,3\}$, and let $\Gamma:=\{T\}$, where $T\subseteq [4]^2$ is defined by
$T(x,y)$ iff $(x\in\{2,3\}\ \text{or}\ y\in\{2,3\})$ and $x \not\equiv y \pmod 2$.
Equivalently,
\[\Gamma:=\{T\}, \quad \text{ such that } \quad T=\{(0,3),(3,0),(1,2),(2,1),(2,3),(3,2)\}.\]
Clearly $\Gamma$ is non constant-satisfiable, since $(c,c)\notin T$ for every $c\in[4]$.

\paragraph{Linear Streaming Satisfiability.} 
We claim that satisfiability of the positive $\CSP(\Gamma)$ is exactly graph bipartiteness. Given an instance, form its underlying graph on vertex set $[n]$, where each constraint $\{T,(x_i,x_j)\}$ is viewed as an edge between $i$ and $j$. If the instance is satisfiable, then every edge $\{T,(x_i,x_j)\}$ forces the assigned values on $x_i$ and $x_j$ to have opposite parity, so the parity of the assignment gives a proper $2$-coloring of the graph. Conversely, if the graph is bipartite, then assigning the value $2$ to one side and the value $3$ to the other satisfies every edge, since $(2,3),(3,2)\in T$. Therefore satisfiability of $\CSP(\Gamma)$ is equivalent to graph bipartiteness. Since single-pass streaming bipartiteness can be decided in $O(n\log n)$ space, the same upper bound holds for streaming satisfiability of $\CSP(\Gamma)$.

\paragraph{Quadratic \NRD.}
We now show that $\nrdg=\Theta(n^2)$. Let $L\cup R=[n]$ be a bipartition with $|L|=\lfloor n/2\rfloor$ and $|R|=\lceil n/2\rceil$, and consider the instance
$I := \{\{T,(x_i,x_j)\} : i\in L,\ j\in R\}$,
corresponding to the complete bipartite graph between $L$ and $R$. We claim that $I$ is non-redundant. Fix any constraint $\{T,(x_i,x_j)\}$ with $i\in L$ and $j\in R$. Assign the value $0$ to $x_i$, the value $1$ to $x_j$, the value $2$ to every other variable in $L$, and the value $3$ to every other variable in $R$. Then the chosen constraint is falsified, since $(0,1)\notin T$, while every other constraint in $I$ is satisfied: every remaining edge has endpoints of opposite parity, and at least one endpoint lies in $\{2,3\}$. Thus every constraint in $I$ has a witness assignment, so $I$ is non-redundant. Hence $\nrdg\ge |L||R|=\Omega(n^2)$. Since each relation in $\Gamma$ has arity at most $2$, trivially $\nrdg=O(n^2)$, and therefore $\nrdg=\Theta(n^2)$.

\paragraph{No Equivalence Relation.}
Finally, we observe that $\equivr$ is not symmetric here either. Let $a=(0,1,3)\in [4]^3$ and $b=(2,1,3)\in [4]^3$. Under $a$, the only ordered pairs of coordinates that satisfy $T$ are $(1,3)$ and $(3,1)$, since $(a_1,a_3)=(0,3)\in T$ and $(a_3,a_1)=(3,0)\in T$. Under $b$, these two ordered pairs are still satisfied, and in addition $(b_1,b_2)=(2,1)\in T$ and $(b_2,b_1)=(1,2)\in T$. Thus every atomic constraint satisfied by $a$ is also satisfied by $b$, so $(a,b)\in \equivr$. However, $b$ satisfies the constraint $\{T,(x_1,x_2)\}$, while $a$ does not. Therefore $(b,a)\notin \equivr$, and hence $\equivr$ is not symmetric.

We conclude that even for the positive \cspg defined by the above singleton, non constant-satisfiable binary relation $T$ over a domain of size $4$, the lower bound by non-redundancy fails, and the relation $\equivr$ need not be an equivalence relation.

\section{Conclusion}\label{sec:conclusion}

We study the single-pass streaming satisfiability of \CSPs through the lens of 
non-redundancy. Our main result shows that the streaming satisfiability of finite \CSPs is tightly 
characterized by non-redundancy up to a logarithmic factor, with an analogous characterization 
holding for positive Boolean \CSPs that are non constant-satisfiable. On the 
algorithmic side, we give a deterministic upper bound by maintaining an equivalent 
non-redundant sub-instance throughout the stream. On the lower bound side, we introduce 
a relation on assignments, show that it is an equivalence relation in the 
settings above, and use this structure to obtain tight reductions from a communication
problem.

There are several natural directions for future work. First, it would be interesting 
to understand \emph{streaming approximate sparsification} through the same lens. In 
the classical offline setting, recent work of Brakensiek and 
Guruswami~\cite{brakensiek2025redundancy} shows that non-redundancy governs approximate 
sparsification up to polylogarithmic factors for unweighted \CSPs, though their result 
is existential and non-constructive. A natural open question is therefore whether 
streaming algorithms can maintain small approximate sparsifiers in $\widetilde{O}(\nrdg)$ 
space, or whether fundamental barriers emerge in the streaming setting.

Second, our counterexamples over larger domains show that positive \CSPs cannot be characterized by non-redundancy in full generality. Even after excluding the trivial cases of empty relations and constant-satisfiable languages, there are positive constraint languages for which single-pass streaming satisfiability is strictly easier than what $\NRD_n(\Gamma)$ predicts.
This suggests that the characterization for positive \csps on non-Boolean domains could be quite different from our characterization, and is an interesting open problem.  


Third, our results further highlight the importance of obtaining tight bounds on 
$\nrdg$ itself. Beyond the streaming setting, non-redundancy has emerged as a central 
structural parameter governing sparsification and kernelization. A better understanding 
of the possible asymptotic behaviours of $\nrdg$, together with more explicit 
classifications for natural predicate families, would directly sharpen the consequences 
of our streaming characterization.\label{line:last}

\section*{Acknowledgement}
We thank Joshua Brakensiek, Aaron Putterman, and Venkatesan Guruswami for helping us better
understand the context of \nrd in the literature.

\printbibliography

\end{document}

%% file: probheader.tex
\input{header}

\usepackage{silence}
\usepackage[colorinlistoftodos,textsize=tiny,textwidth=2cm,color=green!50!gray]{todonotes}

\def\DEBUG{true}

\ifdefined\DEBUG

\newcommand{\amati}[1]{\textcolor{brown}{[AS: #1]}}
\newcommand{\amat}[1]{\textcolor{brown}{#1}}
\newcommand{\amatr}[1]{\todo[color=brown!100!black!20]{AS: #1}}

\newcommand{\snote}[1]{\textcolor{red}{[SV: #1]}}
\newcommand{\santr}[1]{\todo[color=red!100!black!20]{SV: #1}}
\newcommand{\todosec}[1]{\textcolor{orange}{#1}}
\else
\newcommand{\amat}[1]{#1}
\newcommand{\amati}[1]{}
\newcommand{\amatr}[1]{}

\newcommand{\snote}[1]{}
\newcommand{\santi}[1]{}
\newcommant{\todosec}[1]{}
\fi

\usepackage[style=alphabetic, backend=biber, minalphanames=3, maxalphanames=4, maxbibnames=99, maxcitenames=4]{biblatex}

\input{thmmacros}

\newcommand{\CSP}{\ensuremath{\mathsf{CSP}}\xspace}
\newcommand{\csp}{\CSP}

\newcommand{\maxcsp}{\ensuremath{\mathsf{Max}\text{-}\mathsf{CSP}}\xspace}
\newcommand{\mincsp}{\ensuremath{\mathsf{Min}\text{-}\mathsf{CSP}}\xspace}

\newcommand{\CSPs}{\ensuremath{\mathsf{CSPs}}\xspace}
\newcommand{\csps}{\CSPs}

\newcommand{\maxcsps}{\ensuremath{\mathsf{Max}\text{-}\mathsf{CSP}}\text{s}\xspace}
\newcommand{\mincsps}{\ensuremath{\mathsf{Min}\text{-}\mathsf{CSP}}\text{s}\xspace}

\newcommand{\NRD}{\ensuremath{\mathsf{NRD}}\xspace}
\newcommand{\nrd}{\NRD}

\newcommand{\poly}{\ensuremath{\operatorname{poly}}}

\newcommand{\cspg}{\ensuremath{\CSP(\Gamma)}\xspace}

\newcommand{\nrdx}[1]{\ensuremath{\NRD_n(#1)}\xspace}

\newcommand{\nrdg}{\nrdx{\Gamma}}

\newcommand{\Sat}{\mathrm{Sat}} 
\newcommand{\sats}{\ensuremath{\ \mathrm{satisfies}\ }}
\newcommand{\unsats}{\ensuremath{\ \mathrm{unsatisfies}\ }}

\newcommand{\twosat}{$2$\textsf{-SAT}}
\newcommand{\twolin}{$2$\textsf{-LIN}}

\newcommand{\ksat}{$k$\textsf{-SAT}}

\newcommand{\lin}{\textsf{LIN}\xspace}

\newcommand{\maxx}{\textsf{Max}}

\newcommand{\minn}{\textsf{Min}}

\newcommand{\equivr}{\ensuremath{\mathcal{E}}}

\newcommand{\Index}{\ensuremath{\mathsf{INDEX}}\xspace}



\allowdisplaybreaks

    \title{Characterizing Streaming Decidability of CSPs via Non-Redundancy}
\author{Anonymous authors}

\title{Characterizing Streaming Decidability of CSPs via Non-Redundancy}
\author{Amatya Sharma\thanks{University of Michigan, Ann Arbor, Michigan, USA. Email: \texttt{amatya@umich.edu}. Work done in part when the author was visiting the Toyota Technological Institute at Chicago as an intern in Fall 2025.}\and{Santhoshini Velusamy \thanks{University of Waterloo, Ontario, Canada. Supported by NSERC Discovery grant RGPIN-2026-04689 and ECR supplement DGECR-2026-00421. Supported in part by NSF award CCF 2348475 when the author was affiliated with Toyota Technological Institute at Chicago. Email: \texttt{santhoshini.velusamy@uwaterloo.ca}}}}

\date{}

%% file: header.tex
\usepackage[margin=1in]{geometry}
\usepackage[utf8]{inputenc}
\usepackage{amsthm}
\usepackage{amsthm, amsmath, amssymb, mathtools}
\usepackage{bm,bbm}
\usepackage{graphicx}
\usepackage{color,xcolor}
\usepackage{paralist,enumitem}
\usepackage[normalem]{ulem}
\usepackage{tabularx}
\usepackage{tikz}
\usepackage{caption,comment}
\usepackage{algorithmicx}
\usepackage{algorithm}
\usepackage{algpseudocode}
\usepackage{lipsum}
\usepackage[colorlinks=true, allcolors=blue]{hyperref}
\usepackage[capitalize,noabbrev,nameinlink]{cleveref}
\usepackage{booktabs}
\usepackage{subcaption}

\usepackage{xspace}

\usepackage{lineno}
\nolinenumbers

\usepackage{mathpazo}
\usepackage{microtype}
\newcounter{algsubstate}
\renewcommand{\thealgsubstate}{\alph{algsubstate}}

\algnewcommand\algorithmicinput{\textbf{Input:}}
\algnewcommand\Input{\item[\algorithmicinput]}

\algnewcommand\algorithmicoutput{\textbf{Output:}}
\algnewcommand\Output{\item[\algorithmicoutput]}

\algnewcommand\algorithmicgoal{\textbf{Goal:}}
\algnewcommand\Goal{\item[\algorithmicgoal]}

\newcommand{\blind}[2]{{\ifnum\draft=1\color{purple}\fi \ifnum\doubleblind=1#2\fi\ifnum\doubleblind=0#1\fi\ifnum\doubleblind=2$\{$ #1 $\vert$ #2 $\}$\fi}}

\makeatletter
\newcommand{\algmargin}{\the\ALG@thistlm}
\algnewcommand{\parState}[1]{\State%
  \parbox[t]{\dimexpr\linewidth-\algmargin}{\strut #1\strut}}

%% file: thmmacros.tex
\usepackage{amsthm}
\usepackage{thmtools,thm-restate}

\numberwithin{equation}{section}
\declaretheoremstyle[bodyfont=\it,qed=\qedsymbol]{noproofstyle}

\declaretheorem[name=Observation,numbered=no]{observation*}

\declaretheorem[numberlike=equation]{theorem}

\declaretheorem[name=Theorem,numbered=no]{theorem*}

\declaretheorem[numberlike=equation]{lemma}
\declaretheorem[name=Lemma,numbered=no]{lemma*}

\declaretheorem[name=Corollary,numbered=no]{corollary*}

\declaretheorem[name=Proposition,numbered=no]{proposition*}

\declaretheorem[name=Claim,numbered=no]{claim*}

\declaretheorem[name=Conjecture,numbered=no]{conjecture*}

\declaretheorem[name=Question,numbered=no]{question*}

\declaretheoremstyle[spaceabove=12pt,spacebelow=12pt,bodyfont=\it]
{defstyle}

\declaretheorem[unnumbered,name=Definition,style=defstyle]{definition*}

\declaretheorem[unnumbered,name=Example,style=defstyle]{example*}

\declaretheorem[unnumbered,name=Notation=defstyle]{notation*}

\declaretheorem[unnumbered,name=Construction,style=defstyle]{construction*}

\declaretheoremstyle[]{rmkstyle} 

\newtheorem*{remark}{Remark}

\crefname{claim}{claim}{claims}
\crefname{fact}{fact}{facts}
\crefname{corollary}{corollary}{corollaries}
\crefname{theorem}{theorem}{theorems}
\crefname{table}{table}{tables}

\Crefname{theorem}{Theorem}{Theorems}
\Crefname{claim}{Claim}{Claims}
\Crefname{fact}{Fact}{Facts}
\Crefname{corollary}{Corollary}{Corollaries}
\Crefname{table}{Table}{Tables}
